\newtheorem{theorem}{Theorem}[section]
\newtheorem{prop}{Proposition}
\newtheorem{example}{Example}[section]
\newtheorem{corollary}{Corollary}[section]
\numberwithin{equation}{section}
\newtheorem{lemma}{Lemma}[section]
\begin{document}
\title{On Bivariate Exponentiated Extended \\ Weibull Family of Distributions}
\author{Rasool Roozegar\thanks{Corresponding: rroozegar@yazd.ac.ir } , Ali Akbar Jafari  \ \\
{\small Department of Statistics, Yazd University, Yazd,  Iran}\\}

\date{}
\maketitle

\begin{abstract}
In this paper, we introduce a new class of bivariate distributions called the bivariate exponentiated extended Weibull distributions. The model introduced here is of Marshall-Olkin type. This new class of bivariate distributions contains several bivariate lifetime models. Some mathematical properties of the new class of distributions are studied. We provide the joint and conditional density functions, the joint cumulative distribution function and the joint survival function. Special bivariate distributions are investigated in some detail. The maximum likelihood estimators are obtained using the EM algorithm. We illustrate the usefulness of the new class by means of application to two real data sets.
\end{abstract}
{\it Keywords}: Bivariate exponentiated extended Weibull distribution; Joint probability density function; EM-algorithm; Maximum likelihood estimation.
\newline {\it 2010 AMS Subject Classification:} 62E15, 62H10.

\section{Introduction}

The Weibull distribution has assumed a prominent position as statistical model for data from reliability, engineering and biological studies
\citep{mccool-12}.
The Weibull distribution is a reasonable choice due to its negatively and positively skewed density shapes. However, this distribution is not a good model for describing phenomenon with non-monotone failure rates, which can be found on data from applications in reliability studies. Thus, extended forms of the Weibull model have been sought in many applied areas. As a solution for this issue, the inclusion of additional parameters to a well-defined distribution has been indicated as a good methodology for providing more flexible new classes of distributions.

The class of extended Weibull (EW) distributions pioneered by
\cite{gu-di-ra-97}
has achieved a prominent position in lifetime models. Its cumulative distribution function (cdf) is given by
\begin{equation}\label{eq.FG}
G\left(x;\lambda ,{\boldsymbol \xi }\right)=1-e^{-\lambda H(x;{\boldsymbol \xi })},\ \ \ \ \ x>0,\ \ \ \lambda >0,
\end{equation}
where $H(x;{\boldsymbol \xi })$ is a non-negative monotonically increasing function which depends on the parameter vector ${\boldsymbol \xi }$\textbf{. }The corresponding probability density function (pdf) is given by
\begin{equation}\label{eq.fG}
{\rm g}\left(x;\lambda ,{\boldsymbol \xi }\right)=\lambda  h\left(x;{\boldsymbol \xi }\right)e^{-\lambda H(x;{\boldsymbol \xi })},
\ \ \ \ \ x>0,\ \ \ \lambda >0,
\end{equation}
where $h\left(x;{\boldsymbol \xi }\right)$ is the derivative of $H(x;{\boldsymbol \xi })$. We emphasize that several distributions could be expressed in the form \eqref{eq.FG}. Table \ref{tab.model} summarizes several of these models. Further, we refer the reader to
\cite{na-ko-05}
 and
 \cite{ph-la-07}.

In recent years, many authors worked on this class of distributions such as the beta extended Weibull family by
\cite{co-or-si-12beta},
the extended Weibull power series distributions by
\cite{si-bo-di-co-13},
the complementary extended Weibull power series class of distributions by
\cite{co-si-14},
the Marshall-Olkin extended Weibull family of distributions by
\cite{sa-bo-ze-na-co-14}
and the exponentiated extended Weibull-power series class of distributions by
\cite{ta-ja-15Exponentiated}.

\begin{table}[ht]
\caption{ Special cases of EW distributions and corresponding $H(x;{\boldsymbol \xi})$ function} \label{tab.model}
{\footnotesize
\begin{tabular}{|l|c|c|c|c|l|} \hline
Distribution & Support & $H\left(x;{\boldsymbol \xi }\right)$ & $\lambda $ & ${\boldsymbol \xi }$ & \multicolumn{1}{c|}{Reference} \\ \hline
Exponential & $x\geq 0$ & $x$ & $\lambda $ & $\varnothing $ & \cite{jo-ko-ba-95-1}
\\ \hline
Pareto & $x=k$ & ${\log  (x/k)}$ & $\lambda $ & $k$ & \cite{jo-ko-ba-95-1} 
\\ \hline
Gompertz & $x\geq 0$ & $c^{-1}\left[{\exp  \left(cx\right)-1 }\right]$ & $\lambda $ & $c$ & \cite{gompertz-1825}
\\ \hline

Weibull & $x\geq 0$ & $x^{\gamma }$ & $\lambda $ & $\gamma $ & \cite{frechet-27}
\\ \hline

Fr\'{e}chet & $x\geq 0$ & $x^{-\gamma }$ & $\lambda $ & $\gamma $ & \cite{frechet-27}
\\ \hline

Lomax & $x\geq 0$ & $\log(1+x)$ & $\lambda $ & $\varnothing $ & \cite{lomax-54} \\ \hline

Weibull Kies & $0<\mu <x<\sigma $ & ${\left(x-\mu \right)}^{b_1}/{\left(\sigma -x\right)}^{b_2}$ & $\lambda $ & $\left(\mu ,\sigma ,b_1,b_2\right)$ &
 \cite{kies-58} 
 \\ \hline

Log-logistic  &  $x\geq 0$ &  $\log(1+x^c)$ & $\lambda $ &  $c$  & \cite{fisk-61} \\ \hline

Linear failure rate & $x\geq 0$ & $ax+bx^2/2$ & 1 & $(a,b)$ & \cite{barlow-68}
\\ \hline

Log-Weibull & $-\infty<x<\infty$ & ${\exp  [(x-\mu )/\sigma ] }$ & 1 & $\left(\mu ,\sigma \right)$ & \cite{white-69}
 \\ \hline
Exponential power & $x\geq 0$ & ${\exp  ({(cx)}^a-1) }$ & 1 & $(a,c)$ & \cite{sm-ba-75} 
\\ \hline

Burr XII &  $x\geq 0$ & $\log(1+x^c)$ & $\lambda $ &  $c$  & \cite{rodriguez-77} \\ \hline

Rayleigh & $x\geq 0$ & $x^2$ & $\lambda $ & $\varnothing $ & \cite{rayleigh-80} 
 \\ \hline
Phani & $0<\mu <x<\sigma $ & ${[(x-\mu )/(\sigma -x)]}^b$ & $\lambda $ & $\left(\mu ,\sigma ,b\right)$ & \cite{phani-87} 
\\ \hline
Additive Weibull & $x\geq 0$ & ${(x/{\beta }_1)}^{{\alpha }_1}+{(x/{\beta }_2)}^{{\alpha }_2} $ & 1 & $({\alpha }_1,{\alpha }_2,{\beta }_1,{\beta }_2)$ &
\cite{xi-ch-95} 
\\ \hline
Chen & $x\geq 0$ & ${\exp  (x^b-1)}$ & $\lambda $ & $b$ & \cite{chen-00new} 
\\ \hline
Pham & $x\geq 0$ & ${\left(a^x\right)}^{\beta }-1$ & 1 & $(a,\beta )$ & \cite{pham-02} 
\\ \hline
Weibull extension & $x\geq 0$ & $c\left[\exp(cx)^b-1\right]$ & $\lambda $ & $(\gamma ,b,c)$ & \cite{xi-ta-go-02} 
\\ \hline
Modified Weibull & $x\geq 0$ & $x^{\gamma }{\exp  (cx)}$ & $\lambda $ & $(\gamma ,c)$ & \cite{la-xi-mu-03} 
\\ \hline
Traditional Wibull & $x\geq 0$ & $x^d{\exp  (cx^a-1)}$ & $\lambda $ & $(a,b,c)$ & \cite{na-ko-05} 
\\ \hline
Generalized Weibull power & $x\geq 0$ & ${[1+{(x/a)}^b]}^c-1 $ & 1 & $(a,b,c)$ & \cite{ni-ha-06} 
\\ \hline
Flexible Weibull extension & $x\geq 0$ & ${\exp  \left({\alpha }_1x-{\beta }_1/x\right) }$ & 1 & $({\alpha }_1,{\beta }_1)$ & \cite{be-la-zi-07} 
\\ \hline
Almalki Additive Weibull & $x\geq 0$ & $ax^{\theta }+bx^{\gamma }e^{cx}$ & 1 & $(a,b,c,\theta ,\gamma )$ & \cite{al-yu-13} 
 \\ \hline
\end{tabular}
}
\end{table}

The aim of this paper is to introduce a new bivariate exponentiated extended Weibull (BEEW) family of distributions, whose marginals are exponentiated extended Weibull (EEW) distributions. It is obtained using a method similar to that used to obtain Marshall-Olkin bivariate exponential model
\citep{ma-ol-67}.
The proposed BEEW class of distributions is constructed from three independent EEW distributions using a maximization process. Creating a bivariate distribution with given marginals using this technique is nothing new. The joint cdf can be expressed as a mixture of an absolutely continuous cdf and a singular cdf. The joint pdf of the BEEW distributions can take different shapes and the cdf can be expressed in a compact form. The joint cdf, the joint pdf and the joint survival function (sf) are in closed forms, which make it convenient to use in practice.
The new class of bivariate distributions contains as special models the bivariate generalized exponential
\citep{ku-gu-09-BGE},
bivariate generalized linear failure rate
\citep{sa-ha-sm-ku-11},
bivariate generalized Gompertz
\citep{elSh-ib-be-13},
bivariate exponentiated generalized Weibull-Gompertz
\citep{elba-elda-mu-el-15},
bivariate exponentiated modified Weibull extension
\citep{elgo-elmo-15}
distributions. This class defines at least 46 ($2\times23$) bivariate sub-models as special cases.

The usual maximum likelihood estimators can be obtained by solving non-linear equations in at least five unknowns directly, which is not a trivial issue. To avoid difficult computation we treat this problem as a missing value problem and use the EM algorithm, which can be implemented more conveniently than the direct maximization process. Another advantage of the EM algorithm is that it can be used to obtain the observed Fisher information matrix, which is helpful for constructing the asymptotic confidence intervals for the parameters. Alternatively, it is possible to obtain approximate maximum likelihood estimators by estimating the marginals first and then estimating the dependence parameter through a copula function, as suggested by
\citep[][Chapter 10]{joe-97},
which has the same rate of convergence as the maximum likelihood estimators. This is computationally less involved compared to the MLE calculations. This approach is not pursued here. Although in this paper we mainly discuss the BEEW, many of our results can be easily extended to the multivariate case.

The main reasons for introducing this new class of bivariate distributions are: (i) This class of distributions is an important model that can be used in a variety of problems in modeling bivariate lifetime data. (ii) It provides a reasonable parametric fit to skewed bivariate data that cannot be properly fitted by other distributions. (iii) The joint cdf and joint pdf should preferably have a closed form representation; at least numerical evaluation should be possible. (v) This class contains several special bivariate models because of the general class of Weibull distributions and the fact that the current generalization provides means of its bivariate continuous extension to still more complex situations; therefore it can be applied in modeling bivariate lifetime data.

The rest of the paper is organized as follows. We define the EEW and the BEEW class of distributions in Section \ref{sec.BEEW}. Different properties of this family are discussed in this section. The special cases of the BEEW model are considered in Section \ref{sec.sp}. The EM algorithm to compute the MLEs of the unknown parameters is provided in Section \ref{sec.mle}. The analysis of two real data sets are provided in Section \ref{sec.ex}. Finally, we conclude the paper in Section \ref{sec.con}.

\section{ The BEEW model}
\label{sec.BEEW}

In this section, we introduce the BEEW distributions using a method similar to that which was used by
\cite{ma-ol-67}
to define the Marshall Olkin bivariate exponential (MOBE) distribution.

First, consider the univariate EEW class of distributions with cdf given by
\begin{equation}\label{eq.FEEW}
F_{{\rm EEW}}\left(x;\alpha ,\lambda ,{\boldsymbol \xi }\right)={\left(1-e^{-\lambda H(x;{\boldsymbol \xi })}\right)}^{\alpha },\ \ \ \ \ x>0,\ \ \alpha >0,\ \ \lambda >0.
\end{equation}
The corresponding pdf is
\begin{equation}\label{eq.fEEW}
f_{{\rm E}{\rm EW}}\left(x;\alpha ,\lambda ,{\boldsymbol \xi }\right)=\alpha \lambda \ h\left(x;{\boldsymbol \xi }\right)e^{-\lambda H(x;{\boldsymbol \xi })}{\left(1-e^{-\lambda H(x;{\boldsymbol \xi })}\right)}^{\alpha -1}.
\end{equation}

From now on a EEW class of distributions with the shape parameter $\alpha$, the scale parameter $\lambda$ and parameter vector ${\boldsymbol \xi }$ will be denoted by ${\rm EEW}(\alpha ,\lambda ,{\boldsymbol \xi })$. Note that many well-known models could be expressed in the form
\eqref{eq.FEEW}, such as exponentiated Weibull
\citep{mu-sr-93},
 generalized exponential
 \citep{gu-ku-99},
Weibull extension
\citep{chen-00new},
generalized Rayleigh
\citep{su-pa-01,ku-ra-05},
modified Weibull extension
\citep{xi-ta-go-02},
generalized modified Weibull
\citep{ca-or-co-08}
generalized linear failure rate
\citep{sa-ku-09},
generalized Gompertz
\citep{el-al-al-13},
and exponentiated modified Weibull extension
\citep{sa-ap-13}
 distributions.

When $\alpha $ is a positive integer, the EEW model can be interpreted as the lifetime distribution of a parallel system consisting of $\alpha $ independent and identical units whose lifetime follows the EEW distributions.

From now on unless otherwise mentioned, it is assumed that ${\alpha }_1>0$; ${\alpha }_2>0$; ${\alpha }_3>0$ and $\lambda >0$. Suppose $U_1\sim {\rm EEW}({\alpha }_1,\lambda ,{\boldsymbol \xi })$, $U_2\sim {\rm EEW}({\alpha }_2,\lambda ,{\boldsymbol \xi })$ and $U_3\sim {\rm EEW}({\alpha }_3,\lambda ,{\boldsymbol \xi })$ and they are mutually independent. Here ``$\sim$'' means follows or has the distribution. Now define $X_1= \max\{U_1,U_3\}$ and
$X_2= \max\{U_2,U_3\}$. Then, we say that the bivariate vector $(X_1,X_2)$  has a bivariate exponentiated extended Weibull distribution with the shape parameters $\alpha_1$, ${\alpha }_2$ and ${\alpha }_3$, the scale parameter $\lambda $ and parameter vector ${\boldsymbol \xi }$. We will denote it by
${\rm BEEW}({\alpha }_1,{\alpha }_2,{\alpha }_3,\lambda ,{\boldsymbol \xi })$. Before providing the joint cdf or pdf, we first mention how it may occur in practice.

According to
\cite{ku-gu-09-BGE},
suppose a system has two components and it is assumed that each component has been maintained independently and also there is an overall maintenance. Due to component maintenance, suppose the lifetime of the individual component is increased by $U_i$ amount and because of the overall maintenance, the lifetime of each component is increased by $U_3$ amount. Therefore, the increased lifetimes of the two component are $X_1= \max\{U_1,U_3\}$ and
$X_2= \max\{U_2,U_3\}$, respectively.

We now study the joint cdf of the bivariate random vector $(X_1,X_2)$ in the following theorem.

\begin{theorem} If $(X_1,X_2{\boldsymbol )\sim }{\rm BEEW}({\alpha }_1,{\alpha }_2,{\alpha }_3,\lambda ,{\boldsymbol \xi })$, then the joint cdf of
$(X_1,X_2)$ for $x_1>0$, $x_2>0$, is
\begin{equation}\label{eq.FBEEW}
F_{\rm BEEW}\left(x_1, x_2\right)={\left(1-e^{-\lambda H(x_1;{\boldsymbol \xi })}\right)}^{{\alpha }_1}{\left(1-e^{-\lambda H(x_2;{\boldsymbol \xi})}\right)}^{{\alpha }_2}{(1-e^{-\lambda H(z;{\boldsymbol \xi })})}^{{\alpha }_3},
\end{equation}
where $z= \min\{x_1,x_2\}$.
\end{theorem}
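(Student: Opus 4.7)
The plan is to compute the joint cdf directly from the definition $X_1 = \max\{U_1, U_3\}$, $X_2 = \max\{U_2, U_3\}$, exploiting the independence of $U_1, U_2, U_3$ and the fact that the maximum of two variables is bounded by $t$ iff both are bounded by $t$.

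First I would write
\[
F_{\rm BEEW}(x_1, x_2) = P(X_1 \le x_1,\, X_2 \le x_2) = P\bigl(\max\{U_1,U_3\} \le x_1,\ \max\{U_2,U_3\} \le x_2\bigr),
\]
and then translate each maximum condition into simultaneous inequalities: $U_1 \le x_1$, $U_3 \le x_1$, $U_2 \le x_2$, $U_3 \le x_2$. The two constraints on $U_3$ collapse to the single constraint $U_3 \le \min\{x_1,x_2\} = z$, giving
\[
F_{\rm BEEW}(x_1, x_2) = P(U_1 \le x_1,\ U_2 \le x_2,\ U_3 \le z).
\]

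Next I would invoke mutual independence of $U_1, U_2, U_3$ to factor the probability as a product of three marginal cdfs. Since each $U_i \sim {\rm EEW}(\alpha_i, \lambda, \boldsymbol{\xi})$, I substitute the cdf \eqref{eq.FEEW}:
\[
P(U_i \le t) = \bigl(1 - e^{-\lambda H(t;\boldsymbol{\xi})}\bigr)^{\alpha_i}.
\]
Plugging in $t = x_1$ with $\alpha_1$, $t = x_2$ with $\alpha_2$, and $t = z$ with $\alpha_3$, and multiplying the three factors, yields precisely \eqref{eq.FBEEW}.

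There is no real obstacle here; the argument is a two-line independence-plus-factorization computation, and the only subtlety worth flagging is the reduction of the pair $\{U_3 \le x_1,\ U_3 \le x_2\}$ to $\{U_3 \le z\}$, which is why the $\alpha_3$-exponent attaches to $H(z;\boldsymbol{\xi})$ rather than to $H(x_1;\boldsymbol{\xi})$ or $H(x_2;\boldsymbol{\xi})$. I would make that step explicit in the writeup so the appearance of $z = \min\{x_1,x_2\}$ in the final formula is clearly motivated.
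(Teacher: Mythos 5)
Your proposal is correct and follows exactly the paper's own argument: rewrite the event $\{\max\{U_1,U_3\}\le x_1,\ \max\{U_2,U_3\}\le x_2\}$ as $\{U_1\le x_1,\ U_2\le x_2,\ U_3\le z\}$, factor by mutual independence, and substitute the EEW cdf \eqref{eq.FEEW}. The only difference is that you make the collapse of the two $U_3$ constraints to $U_3\le z$ explicit, which the paper leaves implicit; no further comment is needed.
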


\begin{proof}
Since the joint cdf of the random variables $X_1$ and $X_2$ is defined as
\begin{eqnarray*}
F_{{\rm BEEW}}\left(x_1,\ x_2\right)&=&P\left(X_1\le x_1,\ X_2\le x_2\right)\\
&=&P\left({\max  \left\{U_1,U_3\right\}\ }\le x_1,{\max  \left\{U_2,U_3\right\}\ }\le x_2\right)\\
&=&P(U_1\le x_1,U_2\le x_2,U_3\le {\rm min}(x_1,x_2)).
\end{eqnarray*}
As the random variables $U_i$, $(i=1,2,3)$ are mutually independent, we directly obtain
\begin{equation}\label{eq.FF}
F_{\rm BEEW}(x_1,\ x_2;\alpha_1,\alpha_2,\alpha_3,\lambda,{\boldsymbol \xi })
=F_{\rm EEW}(x_1;{\alpha }_1,\lambda ,{\boldsymbol \xi })
 F_{\rm EEW}(x_2;{\alpha }_2,\lambda ,{\boldsymbol \xi })
 F_{\rm EEW}(z  ;{\alpha }_3,\lambda ,{\boldsymbol \xi }).
\end{equation}
Substituting from \ref{eq.FEEW} into \ref{eq.FF}, we obtain \ref{eq.FBEEW}, which completes the proof of the theorem.
\end{proof}

\begin{corollary}
The joint cdf the ${\rm BEEW}({\alpha }_1,{\alpha }_2,{\alpha }_3,\lambda ,{\boldsymbol \xi })$ can also written as
\begin{eqnarray}\label{eq.FBEEW2}
F_{{\rm BEEW}}\left(x_1,x_2\right)&=&\left\{ \begin{array}{ll}
{\left(1-e^{-\lambda H\left(x_1;{\boldsymbol \xi }\right)}\right)}^{{\alpha }_1+{\alpha }_3}{\left(1-e^{-\lambda H\left(x_2;{\boldsymbol \xi }\right)}\right)}^{{\alpha }_2} & {\rm if}\ \ \ x_1\le x_2 \\
{\left(1-e^{-\lambda H\left(x_1;{\boldsymbol \xi }\right)}\right)}^{{\alpha }_1}{\left(1-e^{-\lambda H\left(x_2;{\boldsymbol \xi }\right)}\right)}^{{\alpha }_2+{\alpha }_3} & {\rm if}\ \ \ x_1>x_2 \end{array}
\right.\\
&=&\left\{ \begin{array}{ll}
F_{{\rm EEW}}\left(x_1;{\alpha }_1+{\alpha }_3,\lambda ,{\boldsymbol \xi }\right)F_{{\rm EEW}}\left(x_2;{\alpha }_2,\lambda ,{\boldsymbol \xi }\right) &
{\rm if}\ \ \  x_1<x_2
 \\
F_{{\rm EEW}}\left(x_1;{\alpha }_1,\lambda ,{\boldsymbol \xi }\right)F_{{\rm EEW}}\left(x_2;{\alpha }_2+{\alpha }_3,\lambda ,{\boldsymbol \xi }\right) &
{\rm if}\ \ \  x_2<x_1
 \\
F_{{\rm EEW}}\left(x;{{\alpha }_1+\alpha }_2+{\alpha }_3,\lambda ,{\boldsymbol \xi }\right) &
{\rm  if}\ \ \ x_1=x_2=x.
\end{array}
\right.\nonumber
\end{eqnarray}
\end{corollary}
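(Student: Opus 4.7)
The plan is to derive the corollary directly from the formula established in the preceding theorem by a simple case split on which of $x_1, x_2$ is smaller. Recall that the theorem gives
\[
F_{\rm BEEW}(x_1,x_2)=\bigl(1-e^{-\lambda H(x_1;\boldsymbol\xi)}\bigr)^{\alpha_1}\bigl(1-e^{-\lambda H(x_2;\boldsymbol\xi)}\bigr)^{\alpha_2}\bigl(1-e^{-\lambda H(z;\boldsymbol\xi)}\bigr)^{\alpha_3}
\]
with $z=\min\{x_1,x_2\}$, so the only work is to replace $z$ by the appropriate argument and collect exponents with the same base.

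First I would handle the case $x_1\le x_2$. Here $z=x_1$, so the third factor becomes $(1-e^{-\lambda H(x_1;\boldsymbol\xi)})^{\alpha_3}$, which combines with the first factor to give $(1-e^{-\lambda H(x_1;\boldsymbol\xi)})^{\alpha_1+\alpha_3}$, while the second factor $(1-e^{-\lambda H(x_2;\boldsymbol\xi)})^{\alpha_2}$ is unchanged. Symmetrically, for $x_1>x_2$ we have $z=x_2$, and combining the second and third factors yields $(1-e^{-\lambda H(x_2;\boldsymbol\xi)})^{\alpha_2+\alpha_3}$, leaving $(1-e^{-\lambda H(x_1;\boldsymbol\xi)})^{\alpha_1}$ unchanged. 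This proves the first displayed equality in \eqref{eq.FBEEW2}.

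For the second displayed equality, I would simply recognize each block $(1-e^{-\lambda H(\cdot;\boldsymbol\xi)})^{\beta}$ as the EEW cdf $F_{\rm EEW}(\cdot;\beta,\lambda,\boldsymbol\xi)$ defined in \eqref{eq.FEEW}, thereby translating the two pieces already obtained into the stated products of EEW cdfs. The boundary case $x_1=x_2=x$ follows by noting that in the first form both sub‑cases reduce to the same expression $(1-e^{-\lambda H(x;\boldsymbol\xi)})^{\alpha_1+\alpha_2+\alpha_3}$, which is precisely $F_{\rm EEW}(x;\alpha_1+\alpha_2+\alpha_3,\lambda,\boldsymbol\xi)$.

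There is essentially no obstacle here; the entire argument is algebraic manipulation once the theorem is in hand. The only point requiring a hint of care is making sure the two sub-cases agree on the diagonal $x_1=x_2$, which I would note explicitly so that the piecewise definition is unambiguous and consistent with the continuous-plus-singular decomposition mentioned earlier in the paper.
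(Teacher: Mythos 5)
Your proposal is correct and follows exactly the argument the paper intends (the corollary is stated without an explicit proof precisely because it reduces to substituting $z=x_1$ or $z=x_2$ into the theorem's formula and combining exponents, then identifying each factor with $F_{\rm EEW}$ via \eqref{eq.FEEW}). Your additional remark that the two sub-cases agree on the diagonal is a sensible touch consistent with the paper's later singular-plus-absolutely-continuous decomposition.
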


The following theorem gives the joint pdf of the random variables $X_1$ and $X_2$ which is the joint pdf of ${\rm BEEW}({\alpha }_1,{\alpha }_2,{\alpha }_3,\lambda ,{\boldsymbol \xi })$.

\begin{theorem} \label{thm.fX1X2}
If $(X_1,X_2)\sim {\rm BEEW}({\alpha }_1,\alpha_2,{\alpha }_3,\lambda ,{\boldsymbol \xi })$ then the joint pdf of
$(X_1,X_2)$ for $x_1>0$, $x_2>0$, is
\begin{equation}\label{eq.fBEEW}
f_{{\rm BEEW}}\left(x_1,x_2\right)=
\left\{ \begin{array}{ll}
f_{{\rm 1}}\left(x_1,x_2\right) & {\rm if}\ \ \ \ 0<\ x_1<x_2\\
f_{{\rm 2}}\left(x_1,x_2\right) & {\rm if}\ \ \ \ 0<\ x_2<x_1\\
f_0\left(x\right)               & {\rm if}\ \ \ \ 0<x_1=x_2=x, \end{array}
\right.
\end{equation}
where
\begin{eqnarray}
f_{{\rm 1}}\left(x_1,x_2\right)&=&f_{{\rm EEW}}\left(x_1;{\alpha }_1+{\alpha }_3,\lambda ,{\boldsymbol \xi }\right)f_{{\rm EEW}}\left(x_2;{\alpha }_2,\lambda ,{\boldsymbol \xi }\right)\nonumber\\
&=&
\left({\alpha }_1+{\alpha }_3\right){\alpha }_2{\lambda }^2\ h\left(x_1;{\boldsymbol \xi }\right)h\left(x_2;{\boldsymbol \xi }\right){\left(1-e^{-\lambda H(x_1;{\boldsymbol \xi })}\right)}^{{\alpha }_1+{\alpha }_3-1}\nonumber\\
&&\times
 {\left(1-e^{-\lambda H(x_2;{\boldsymbol \xi })}\right)}^{{\alpha }_2-1} e^{-\lambda H(x_1;{\boldsymbol \xi })-\lambda H(x_2;{\boldsymbol \xi })}\label{eq.f1}\\
f_{{\rm 2}}\left(x_1,x_2\right)&=&f_{{\rm EEW}}\left(x_1;{\alpha }_1,\lambda ,{\boldsymbol \xi }\right)f_{{\rm EEW}}\left(x_2;{\alpha }_2+{\alpha }_3,\lambda ,{\boldsymbol \xi }\right)\nonumber\\
&=&\left({\alpha }_2+{\alpha }_3\right){\alpha }_1{\lambda }^2\ h\left(x_1;{\boldsymbol \xi }\right)h\left(x_2;{\boldsymbol \xi }\right){\left(1-e^{-\lambda H(x_1;{\boldsymbol \xi })}\right)}^{{\alpha }_1-1}\nonumber\\
&&\times
{\left(1-e^{-\lambda H(x_2;{\boldsymbol \xi })}\right)}^{{\alpha }_2+{\alpha }_3-1} e^{-\lambda H(x_1;{\boldsymbol \xi })-\lambda H(x_2;{\boldsymbol \xi })}\label{eq.f2}\\
f_0\left(x\right)
&=&\frac{{\alpha }_3}{{\alpha }_1+{\alpha }_2+{\alpha }_3}f_{{\rm EEW}}\left(x;{\alpha }_1{+\alpha }_2+{\alpha }_3,\lambda ,{\boldsymbol \xi }\right)\nonumber\\
&=&{\alpha }_3\lambda \ h\left(x;{\boldsymbol \xi }\right){\left(1-e^{-\lambda H(x;{\boldsymbol \xi })}\right)}^{{\alpha }_1+{\alpha }_2+{\alpha }_3-1} e^{-\lambda H(x;{\boldsymbol \xi })}.\label{eq.f0}
\end{eqnarray}
\end{theorem}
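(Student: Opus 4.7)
My plan is to exploit the fact that $X_1=X_2$ is equivalent to $\{U_3\ge\max\{U_1,U_2\}\}$, which has strictly positive probability. Consequently, the joint law of $(X_1,X_2)$ is a mixture of an absolutely continuous component supported on $\{x_1\neq x_2\}$ and a singular component on the diagonal, and this is exactly what forces the piecewise statement in the theorem. I will therefore derive each of the three pieces $f_1$, $f_2$, $f_0$ separately.

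On the open region $\{0<x_1<x_2\}$ I will invoke the corollary, which supplies the product form
\begin{equation*}
F_{\rm BEEW}(x_1,x_2)=F_{\rm EEW}(x_1;\alpha_1+\alpha_3,\lambda,\boldsymbol\xi)\,F_{\rm EEW}(x_2;\alpha_2,\lambda,\boldsymbol\xi).
\end{equation*}
Differentiating once in each argument factors $\partial^2 F_{\rm BEEW}/\partial x_1\partial x_2$ as $f_{\rm EEW}(x_1;\alpha_1+\alpha_3,\lambda,\boldsymbol\xi)\cdot f_{\rm EEW}(x_2;\alpha_2,\lambda,\boldsymbol\xi)$, and substituting \eqref{eq.fEEW} delivers \eqref{eq.f1}. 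The region $\{0<x_2<x_1\}$ will be handled the same way, starting from the second branch of the corollary, and produces \eqref{eq.f2}.

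For the singular part I will use the stochastic representation $X_i=\max\{U_i,U_3\}$. Conditioning on $U_3$ and exploiting the mutual independence of $U_1,U_2,U_3$,
\begin{equation*}
P(X_1=X_2\le x)=P(U_1\le U_3,\,U_2\le U_3,\,U_3\le x)=\int_0^x F_{\rm EEW}(u;\alpha_1)\,F_{\rm EEW}(u;\alpha_2)\,f_{\rm EEW}(u;\alpha_3)\,du,
\end{equation*}
where I suppress the common arguments $(\lambda,\boldsymbol\xi)$. After inserting \eqref{eq.FEEW} and \eqref{eq.fEEW}, the three $(1-e^{-\lambda H(u;\boldsymbol\xi)})$ factors will amalgamate into a single power with exponent $\alpha_1+\alpha_2+\alpha_3-1$, and differentiating in $x$ then yields $f_0(x)$ in the form \eqref{eq.f0}.

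The main obstacle is conceptual rather than computational: one has to recognise that the joint distribution carries positive mass on the diagonal and that the corresponding density $f_0$ lives against one-dimensional Lebesgue measure there, not two-dimensional. Once this decomposition is in place, the two absolutely continuous pieces fall out of the corollary by a routine mixed partial derivative, and the diagonal density is pinned down by the short conditioning argument above. As a consistency check I would verify $\iint_{x_1<x_2}f_1+\iint_{x_1>x_2}f_2+\int_0^\infty f_0(x)\,dx=1$, which follows directly from $\int_0^\infty f_{\rm EEW}(u;\beta,\lambda,\boldsymbol\xi)\,du=1$ for each $\beta>0$.
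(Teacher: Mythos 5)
Your proposal is correct, and for the two absolutely continuous pieces it coincides with the paper's proof: both differentiate the branchwise product form of the joint cdf from the corollary once in each argument and read off $f_1$ and $f_2$ from \eqref{eq.fEEW}. Where you genuinely diverge is the diagonal piece. The paper does not derive $f_0$ directly; it asserts the form \eqref{eq.f0} and then justifies it by a normalization argument, computing $\iint_{x_1<x_2}f_1=\frac{\alpha_2}{\alpha_1+\alpha_2+\alpha_3}$ and $\iint_{x_2<x_1}f_2=\frac{\alpha_1}{\alpha_1+\alpha_2+\alpha_3}$ and checking that the proposed $f_0$ carries exactly the remaining mass $\frac{\alpha_3}{\alpha_1+\alpha_2+\alpha_3}$ so that the total probability is one. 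You instead identify $\{X_1=X_2\}$ with $\{U_3\ge\max\{U_1,U_2\}\}$ up to a null event, condition on $U_3$ to get $P(X_1=X_2\le x)=\int_0^x F_{\rm EEW}(u;\alpha_1)F_{\rm EEW}(u;\alpha_2)f_{\rm EEW}(u;\alpha_3)\,du$, and differentiate; a quick check confirms the integrand collapses to $\alpha_3\lambda h(u;{\boldsymbol\xi})e^{-\lambda H(u;{\boldsymbol\xi})}\bigl(1-e^{-\lambda H(u;{\boldsymbol\xi})}\bigr)^{\alpha_1+\alpha_2+\alpha_3-1}$, which is exactly \eqref{eq.f0}. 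Your route buys a constructive and logically tighter derivation: the paper's argument only pins down the total weight of the singular part, not its functional form, whereas your conditioning argument determines $f_0$ uniquely (and the normalization identity then becomes the consistency check you mention rather than the proof itself). The paper's route is shorter and is the convention in the Marshall--Olkin literature, but your version is the one I would regard as complete.
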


\begin{proof}
First assume that $x_1<x_2$. Then, the expression for $f_{{\rm 1}}\left(x_1,x_2\right)$ can be obtained simply by differentiating the joint cdf  $F_{{\rm BEEW}}\left(x_1, x_2\right)$ given in \eqref{eq.FBEEW2} with respect to $x_1$ and $x_2$. Similarly, we find the expression of $f_{{\rm 2}}\left(x_1,x_2\right)$ when $x_2<x_1$. But $f_0\left(x\right)$ cannot be derived in the same way. Using the facts that
\begin{eqnarray*}
&&\int^{\infty }_0{\int^{x_2}_0{f_{{\rm 1}}\left(x_1,x_2\right)\ dx_1dx_2}}+\int^{\infty }_0{\int^{x_1}_0{f_{{\rm 2}}\left(x_1,x_2\right)\ dx_2dx_1}}+\int^{\infty }_0{f_0\left(x\right)\ dx}=1,\\
&&\int^{\infty }_0{\int^{x_2}_0{f_{{\rm 1}}\left(x_1,x_2\right)\ dx_1dx_2}}={\alpha }_2\int^{\infty }_0{\lambda \ h\left(x;{\boldsymbol \xi }\right){\left(1-e^{-\lambda H(x;{\boldsymbol \xi })}\right)}^{{\alpha }_1+{\alpha }_2+{\alpha }_3-1}\ \ e^{-\lambda H(x;{\boldsymbol \xi })}\ dx},
\end{eqnarray*}
and
\[\int^{\infty }_0{\int^{x_1}_0{f_{{\rm 2}}\left(x_1,x_2\right)\ dx_2dx_1}}={\alpha }_1\int^{\infty }_0{\lambda \ h\left(x;{\boldsymbol \xi }\right){\left(1-e^{-\lambda H(x;{\boldsymbol \xi })}\right)}^{{\alpha }_1+{\alpha }_2+{\alpha }_3-1}\ \ e^{-\lambda H(x;{\boldsymbol \xi })}\ dx}.\]
Note that
\[\int^{\infty }_0{f_0\left(x\right)\ dx}={\alpha }_3\int^{\infty }_0{\lambda \ h\left(x;{\boldsymbol \xi }\right){\left(1-e^{-\lambda H(x;{\boldsymbol \xi })}\right)}^{{\alpha }_1+{\alpha }_2+{\alpha }_3-1}\ \ e^{-\lambda H(x;{\boldsymbol \xi })}\ dx}=\frac{{\alpha }_3}{{\alpha }_1+{\alpha }_2+{\alpha }_3}.\]
Thus, the result follows.
\end{proof}

The following theorem gives the marginal pdf's of $X_1$ and $X_2$.

\begin{theorem} \label{thm.fXi}
The marginal distributions of $X_1$ and $X_2$ are ${\rm EEW}(\alpha_1+\alpha_3,\lambda ,\boldsymbol \xi)$ and
$EEW(\alpha_2$ $+\alpha_3,\lambda,\boldsymbol \xi)$, respectively.
\end{theorem}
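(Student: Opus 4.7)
The plan is to derive the marginal cdf of $X_1$ (the $X_2$ case being symmetric) by two equivalent routes and confirm that the result has the form \eqref{eq.FEEW} with shape parameter $\alpha_1+\alpha_3$. Both routes are short, so the write-up will stress why each step is legitimate rather than execute long algebra.

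The first route is to work directly from the definition $X_1=\max\{U_1,U_3\}$. Since $U_1\sim\text{EEW}(\alpha_1,\lambda,\boldsymbol\xi)$ and $U_3\sim\text{EEW}(\alpha_3,\lambda,\boldsymbol\xi)$ are independent, the cdf of $X_1$ factors as
\[
F_{X_1}(x_1)=P(U_1\le x_1)\,P(U_3\le x_1)=\bigl(1-e^{-\lambda H(x_1;\boldsymbol\xi)}\bigr)^{\alpha_1}\bigl(1-e^{-\lambda H(x_1;\boldsymbol\xi)}\bigr)^{\alpha_3},
\]
and combining exponents gives $\bigl(1-e^{-\lambda H(x_1;\boldsymbol\xi)}\bigr)^{\alpha_1+\alpha_3}$, which matches \eqref{eq.FEEW} with shape parameter $\alpha_1+\alpha_3$. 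Differentiating then yields the marginal pdf in the form \eqref{eq.fEEW}. The same argument with the roles of $U_1$ and $U_2$ interchanged handles $X_2$.

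The second route, which I would mention as a consistency check, starts from the joint cdf already established in \eqref{eq.FBEEW}. Letting $x_2\to\infty$, the factor $\bigl(1-e^{-\lambda H(x_2;\boldsymbol\xi)}\bigr)^{\alpha_2}$ tends to $1$, and for $x_2$ eventually larger than $x_1$ one has $z=\min\{x_1,x_2\}=x_1$, so the joint cdf collapses to $\bigl(1-e^{-\lambda H(x_1;\boldsymbol\xi)}\bigr)^{\alpha_1+\alpha_3}$, agreeing with the first route. This confirms that the piecewise representation \eqref{eq.FBEEW2} is consistent with the stated marginals.

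There is no real obstacle here: the entire content is the observation that the EEW family is closed under taking the maximum of independent members with a common scale parameter and common baseline $H$, with the shape parameters adding. The only thing to be careful about is not to forget this closure property and to justify the limit in the second route (which is routine since the cdf of each EEW component tends to $1$).
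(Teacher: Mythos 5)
Your first route is exactly the paper's proof: factor $F_{X_1}(x_1)=P(U_1\le x_1)P(U_3\le x_1)$ by independence and add the exponents to recognize an ${\rm EEW}(\alpha_1+\alpha_3,\lambda,\boldsymbol\xi)$ cdf, then differentiate. The proposal is correct, and the second route via the limit of the joint cdf is just an additional (valid) consistency check.
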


\begin{proof}
The marginal cdf for $X_i$ is
\[F_{X_i}\left(x_i\right)=P\left(X_i\le x_i\right)=P\left({\max  \left\{U_i,U_3\right\}\ }\le x_i\right)=P(U_i\le x_i,U_3\le x_i).\]
Since the random variables $U_i$, $(i=1,2)$ are mutually independent, we obtain
\begin{eqnarray}\label{eq.FXi}
F_{X_i}\left(x_i\right)&=&P\left(U_i\le x_i\right)P(U_3\le x_i)\nonumber\\
&=&F_{{\rm EEW}}\left(x_i;{\alpha }_i,\lambda ,{\boldsymbol \xi }\right)F_{{\rm EEW}}\left(x_i;{\alpha }_3,\lambda ,{\boldsymbol \xi }\right)\nonumber\\
&=&
F_{{\rm EEW}}\left(x_i;{\alpha }_i+{\alpha }_3,\lambda ,{\boldsymbol \xi }\right).
\end{eqnarray}
From \ref{eq.FXi}, we can derive the pdf of $X_i$ by differentiation.
\end{proof}

The BEEW model has both an absolute continuous part and a singular part, similar to Marshall and Olkin's bivariate exponential model. The joint cdf of $X_1$ and $X_2$ has a singular part along the line $x_1=x_2$, with weight $\frac{{\alpha }_3}{{\alpha }_1+{\alpha }_2+{\alpha }_3}$, and has an absolutely continuous part on $0<x_1\ne x_2<\infty $ with weight $\frac{{\alpha }_1+{\alpha }_2}{{\alpha }_1+\alpha_2+{\alpha }_3}$.

Interestingly, the BEEW model can be obtained by using the Marshall\_Olkin (MO) copula with the marginals as the EEW distributions. To every bivariate cdf $F_{X_1,X_2}$with continuous marginals $F_{X_1}$ and $F_{X_2}$there corresponds a unique bivariate cdf with uniform margins $C:{\left[0,1\right]}^2\to [0,1]$ called a copula, such that $F_{X_1,X_2}\left(x_1,x_2\right)=C\{F_{X_1}\left(x_1\right),F_{X_2}\left(x_2\right)\}$ holds for all $\left(x_1,x_2\right)\in {{\mathbb R}}^2$
\citep{nelson-99}.
The MO copula is
\[C_{{\theta }_1,{\theta }_2}\left(u_1,u_2\right)=u^{1-{\theta }_1}_1u^{1-{\theta }_2}_2\min  \left\{u^{{\theta }_1}_1,u^{{\theta }_2}_2\right\} ,\]
for $0<{\theta }_1<1$ and $0<{\theta }_2<1$. Using $u_i=F_{X_i}\left(x_i\right)$ where $X_i$ is ${\rm EEW}({\alpha }_i+{\alpha }_3,\lambda ,\xi )$ and ${\theta }_i=\frac{{\alpha }_3}{{\alpha }_i+{\alpha }_3}$, $i=1,2,3,$ gives the same joint cdf $F_{X_1,X_2}$ as \eqref{eq.FBEEW2}.

The following result will provide explicitly the absolute continuous part and the singular part of the BEEW cdf.

\begin{theorem}
If $ (X_1,X_2 )\sim  {\rm BEEW}({\alpha }_1,{\alpha }_2,{\alpha }_3,\lambda ,{\boldsymbol \xi })$,
then
\[F_{X_1,X_2}\left(x_1,x_2\right)=\frac{{\alpha }_1+{\alpha }_2}{{\alpha }_1+{\alpha }_2+{\alpha }_3}F_a\left(x_1,x_2\right)+\frac{{\alpha }_3}{{\alpha }_1+{\alpha }_2+{\alpha }_3}F_s\left(x_1,x_2\right),\]
where for $x={\min  \left\{x_1,x_2\right\}}$,
\[F_s\left(x_1,x_2\right)={\left(1-e^{-\lambda H\left(x;{\boldsymbol \xi }\right)}\right)}^{{\alpha }_1+{\alpha }_2+{\alpha }_3},\]
and
\begin{eqnarray*}
F_a\left(x_1,x_2\right)&=&\frac{{\alpha }_1+{\alpha}_2+{\alpha }_3}{{\alpha }_1+{\alpha }_2}{\left(1-e^{-\lambda H\left(x_1;{\boldsymbol \xi }\right)}\right)}^{{\alpha }_1}{\left(1-e^{-\lambda H\left(x_2;{\boldsymbol \xi }\right)}\right)}^{{\alpha }_2}{\left(1-e^{-\lambda H\left(x;{\boldsymbol \xi }\right)}\right)}^{{\alpha }_3}\\
&&
-\frac{{\alpha }_3}{{\alpha }_1+{\alpha }_2}{\left(1-e^{-\lambda H\left(x;{\boldsymbol \xi }\right)}\right)}^{{\alpha }_1+{\alpha }_2+{\alpha }_3},
\end{eqnarray*}
here $F_s\left(.,.\right)$ and $F_a\left(.,.\right)$ are the singular and the absolute continuous parts, respectively.
\end{theorem}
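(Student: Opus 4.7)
The plan is to extract the singular component $F_s$ directly from the probabilistic construction by isolating the event $\{X_1=X_2\}=\{U_3\ge U_1,\,U_3\ge U_2\}$, then to define $F_a$ by subtraction through the algebraic identity $F_{X_1,X_2}=p_aF_a+p_sF_s$ (with $p_s=\alpha_3/(\alpha_1+\alpha_2+\alpha_3)$), and finally to confirm that the two pieces really are absolutely continuous and singular, respectively.

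For the singular part, I would start from
\begin{equation*}
p_s F_s(x_1,x_2)=P(X_1\le x_1,X_2\le x_2,\,X_1=X_2)=P(U_1\le U_3,\,U_2\le U_3,\,U_3\le z),
\end{equation*}
where $z=\min\{x_1,x_2\}$, using the fact that in the model $X_1=X_2$ precisely when $U_3$ dominates $U_1$ and $U_2$, in which case $X_1=X_2=U_3$. Conditioning on $U_3$ and using mutual independence gives
\begin{equation*}
p_sF_s(x_1,x_2)=\int_0^z F_{\rm EEW}(u;\alpha_1,\lambda,\boldsymbol\xi)\,F_{\rm EEW}(u;\alpha_2,\lambda,\boldsymbol\xi)\,f_{\rm EEW}(u;\alpha_3,\lambda,\boldsymbol\xi)\,du.
\end{equation*}
The substitution $v=1-e^{-\lambda H(u;\boldsymbol\xi)}$ reduces the integrand to $\alpha_3 v^{\alpha_1+\alpha_2+\alpha_3-1}\,dv$, yielding $p_sF_s(x_1,x_2)=\frac{\alpha_3}{\alpha_1+\alpha_2+\alpha_3}(1-e^{-\lambda H(z;\boldsymbol\xi)})^{\alpha_1+\alpha_2+\alpha_3}$, from which the claimed expression for $F_s$ follows after dividing by $p_s$.

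Next I would define $F_a:=(F_{X_1,X_2}-p_sF_s)/p_a$, plug in the closed-form expression for $F_{X_1,X_2}$ from equation \eqref{eq.FBEEW2}, and simplify. This is the routine algebraic step: in each of the cases $x_1\le x_2$ and $x_2\le x_1$ the factor $(1-e^{-\lambda H(z;\boldsymbol\xi)})^{\alpha_3}$ separates cleanly, and since the two pieces of \eqref{eq.FBEEW2} already agree on the diagonal, a single formula in terms of $z=\min\{x_1,x_2\}$ covers both cases and recovers the stated $F_a$.

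The main obstacle is less the computation than the structural verification, namely that the object we have labeled $F_a$ is genuinely absolutely continuous and that $F_s$ is genuinely singular. For $F_s$, this is immediate because $F_s$ is constant on each half-plane $\{x_1>x_2\}$ and $\{x_2>x_1\}$ (depending only on $z$), so its probability mass is supported on the Lebesgue-null diagonal. For $F_a$, I would compute $\partial^2 F_a/\partial x_1\partial x_2$ off the diagonal; by construction this equals $f_1(x_1,x_2)/p_a$ and $f_2(x_1,x_2)/p_a$ from Theorem \ref{thm.fX1X2} in the respective regions, and the computation at the end of the proof of Theorem \ref{thm.fX1X2} already shows $\iint f_1+\iint f_2=(\alpha_1+\alpha_2)/(\alpha_1+\alpha_2+\alpha_3)=p_a$, so this mixed partial integrates to one over $\{x_1\ne x_2\}$. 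Hence $F_a$ is a proper absolutely continuous cdf and the stated mixture decomposition holds.
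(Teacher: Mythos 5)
Your proof is correct, but it attacks the decomposition from the opposite end to the paper. The paper starts from the generic mixture identity $F_{X_1,X_2}=aF_a+(1-a)F_s$, identifies the absolutely continuous part first by computing $\partial^2 F_{X_1,X_2}/\partial x_1\partial x_2$ off the diagonal (which reproduces $f_1$ and $f_2$ from Theorem \ref{thm.fX1X2}), obtains the weight $a=(\alpha_1+\alpha_2)/(\alpha_1+\alpha_2+\alpha_3)$ by integrating that density, recovers $F_a$ as the double integral of $f_a$, and then defines $F_s$ by subtraction --- without ever explicitly checking that the leftover piece is singular. You instead compute the singular part first and directly, using the probabilistic structure of the model: $\{X_1=X_2\}$ coincides (a.s.) with $\{U_3\ge U_1,\,U_3\ge U_2\}$, so $p_sF_s(x_1,x_2)=P(U_1\le U_3,U_2\le U_3,U_3\le z)$, which the substitution $v=1-e^{-\lambda H(u;\boldsymbol\xi)}$ evaluates in closed form; $F_a$ is then obtained by subtraction from \eqref{eq.FBEEW2}, and you close the loop by checking that its mixed partial integrates to one over $\{x_1\ne x_2\}$. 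What your route buys is a transparent interpretation of $F_s$ as the conditional cdf given $X_1=X_2$ and an explicit verification that the two components really are singular and absolutely continuous, which the paper's subtraction step leaves implicit; what the paper's route buys is that the weight $a$ and the density $f_a$ fall out of computations already performed in the proof of Theorem \ref{thm.fX1X2}, so nothing new has to be integrated. Both arguments are valid and yield the same $F_a$, $F_s$ and weights.
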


\begin{proof}
To find $F_a\left(x_1,x_2\right)$ from $F_{X_1,X_2}\left(x_1,x_2\right)=aF_a\left(x_1,x_2\right)+\left(1-a\right)F_s\left(x_1,x_2\right),\ 0\le a\le 1,$ we compute
\[\frac{{\partial }^2F_{X_1,X_2}\left(x_1,x_2\right)}{\partial x_1\ \partial x_2}=af_{a}\left(x_1,x_2\right)
=\left\{ \begin{array}{ll}
f_1\left(x_1,x_2\right)&  {\rm if}\ \ \ x_1<x_2 \\
f_2\left(x_1,x_2\right)&  {\rm if}\ \ \ x_1>x_2, \end{array}
\right.\]
from which $a$ may be obtained as
\[a=\int^{\infty }_0{\int^{x_2}_0{f_{{\rm 1}}\left(x_1,x_2\right)\ dx_1dx_2}}+\int^{\infty }_0{\int^{x_1}_0{f_{{\rm 2}}\left(x_1,x_2\right)\ dx_2dx_1}}=\frac{{\alpha }_1+{\alpha }_2}{{\alpha }_1+{\alpha }_2+{\alpha }_3},\]
and
\[F_a\left(x_1,x_2\right)=\int^{x_1}_0{\int^{x_2}_0{f_{{\rm a}}\left(s,t\right)\ ds\ dt}.}\]
Once $a$ and $F_a\left(.,.\right)$ are determined, $F_s\left(.,.\right)$ can be obtained by subtraction.
\end{proof}

\begin{corollary}
The joint pdf of $X_1$ and $X_2$ can be written as follows for $x={\min  \left\{x_1,x_2\right\}};$
\[f_{X_1,X_2}\left(x_1,x_2\right)=\frac{{\alpha }_1+{\alpha }_2}{{\alpha }_1+{\alpha }_2+{\alpha }_3}f_a\left(x_1,x_2\right)+\frac{{\alpha }_3}{{\alpha }_1+{\alpha }_2+{\alpha }_3}f_s\left(x\right),\]
where
\[f_a\left(x_1,x_2\right)=\frac{{\alpha }_1+{\alpha }_2+{\alpha }_3}{{\alpha }_1+{\alpha }_2}\times
 \left\{ \begin{array}{ll}
f_{{\rm EEW}}\left(x_1;{\alpha }_1+{\alpha }_3,\lambda ,{\boldsymbol \xi }\right)f_{{\rm EEW}}\left(x_2;{\alpha }_2,\lambda ,{\boldsymbol \xi }\right)
& {\rm if}\ \ \ x_1<x_2, \\
f_{{\rm EEW}}\left(x_1;{\alpha }_1,\lambda ,{\boldsymbol \xi }\right)f_{{\rm EEW}}\left(x_2;{\alpha }_2+{\alpha }_3,\lambda ,{\boldsymbol \xi }\right)
& {\rm if} \ \ \ x_1>x_2, \end{array}
\right.\]
and
\[f_s\left(x\right)=f_{{\rm EEW}}\left(x;{{\alpha }_1+\alpha }_2+{\alpha }_3,\lambda ,{\boldsymbol \xi }\right).\]
Clearly, here $f_a\left(x_1,x_2\right)$ and $f_s\left(x\right)$ are the absolute continuous part and singular part, respectively.
\end{corollary}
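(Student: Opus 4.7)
The plan is to derive this corollary by differentiating the mixture decomposition $F_{X_1,X_2}=\tfrac{\alpha_1+\alpha_2}{\alpha_1+\alpha_2+\alpha_3}F_a+\tfrac{\alpha_3}{\alpha_1+\alpha_2+\alpha_3}F_s$ supplied by the preceding theorem. Because the mixing weights are constants, it suffices to identify the planar density of $F_a$ on the two open regions bounded by the diagonal, and separately to identify the one-dimensional density that $F_s$ induces on the line $\{x_1=x_2\}$.

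For the absolutely continuous piece I would compute $\partial^2 F_a/(\partial x_1\partial x_2)$ on $\{x_1<x_2\}$ and $\{x_1>x_2\}$ separately. The subtracted term $(1-e^{-\lambda H(x;\boldsymbol{\xi})})^{\alpha_1+\alpha_2+\alpha_3}$ depends only on $x_1$ in the first region and only on $x_2$ in the second, so it contributes nothing to the mixed second partial. The surviving product of three EEW-type factors collapses, after absorbing the $\alpha_3$-power into the appropriate marginal, into $(1-e^{-\lambda H(x_1)})^{\alpha_1+\alpha_3}(1-e^{-\lambda H(x_2)})^{\alpha_2}$ for $x_1<x_2$ and symmetrically for $x_1>x_2$; differentiating these products reproduces the products of EEW pdfs inside the braces of $f_a$, while the prefactor $(\alpha_1+\alpha_2+\alpha_3)/(\alpha_1+\alpha_2)$ carries through. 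For the singular piece, $F_s(x_1,x_2)$ depends on $(x_1,x_2)$ only through $\min\{x_1,x_2\}$, so it is the cdf of a measure supported on the diagonal whose projection along that diagonal has cdf $F_{\rm EEW}(\cdot;\alpha_1+\alpha_2+\alpha_3,\lambda,\boldsymbol{\xi})$; differentiating once in the diagonal direction gives $f_s(x)=f_{\rm EEW}(x;\alpha_1+\alpha_2+\alpha_3,\lambda,\boldsymbol{\xi})$.

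As an internal consistency check one recovers the piecewise formulas of Theorem \ref{thm.fX1X2}: off the diagonal the weight $(\alpha_1+\alpha_2)/(\alpha_1+\alpha_2+\alpha_3)$ cancels against the prefactor $(\alpha_1+\alpha_2+\alpha_3)/(\alpha_1+\alpha_2)$ inside $f_a$, returning $f_1$ and $f_2$; on the diagonal, the weight $\alpha_3/(\alpha_1+\alpha_2+\alpha_3)$ absorbs the leading factor $\alpha_1+\alpha_2+\alpha_3$ inside $f_{\rm EEW}$ and leaves the kernel $\alpha_3\lambda\,h(x;\boldsymbol{\xi})(1-e^{-\lambda H(x;\boldsymbol{\xi})})^{\alpha_1+\alpha_2+\alpha_3-1}e^{-\lambda H(x;\boldsymbol{\xi})}$, which is exactly $f_0$. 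The main conceptual obstacle to flag is that the displayed identity mixes a two-dimensional density $f_a$ with a one-dimensional density $f_s$, so it must be read as the Lebesgue decomposition of the joint law of $(X_1,X_2)$ into its absolutely continuous and singular components rather than as a pointwise equality of planar densities; once that convention is adopted no additional analytical work beyond the two differentiations above is required.
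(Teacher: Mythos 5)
Your proposal is correct and follows exactly the route the paper intends: the corollary is presented there as an immediate consequence of the preceding theorem, obtained by taking the mixed partial of $F_a$ off the diagonal and the diagonal derivative of $F_s$, which is what you do. Your consistency check against Theorem \ref{thm.fX1X2} and your remark that the identity must be read as a Lebesgue decomposition (a planar density plus a one-dimensional density on the line $x_1=x_2$) are both accurate and, if anything, more careful than the paper's implicit treatment.
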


Having obtained the marginal pdf of $X_1$ and $X_2$, we can now derive the pdf's as presented in the following theorem.

\begin{theorem}\label{fXicondXj}
The conditional pdf of $X_i$ given $X_j=x_j$, denoted by $f_{X_i|X_j}\left(x_i|x_j\right)$, $i\ne j=1,2,$ is given by
\begin{equation}
f_{X_i|X_j}\left(x_i|x_j\right)=\left\{ \begin{array}{ll}
f^{\left(1\right)}_{X_i|X_j}\left(x_i|x_j\right) & {\rm if}\ \ \ 0<x_i<x_j \\
f^{\left(2\right)}_{X_i|X_j}\left(x_i|x_j\right) & {\rm if}\ \ \ 0<x_j<x_i \\
f^{\left(3\right)}_{X_i|X_j}\left(x_i|x_j\right) & {\rm if}\ \ \ x_i=x_j>0, \end{array}
\right.
\end{equation}
where
\begin{eqnarray*}
&&f^{\left(1\right)}_{X_i|X_j}\left(x_i|x_j\right)=\frac{\left({\alpha }_i+{\alpha }_3\right){\alpha }_j\lambda \ h\left(x_i;{\boldsymbol \xi }\right){\left(1-e^{-\lambda H(x_i;{\boldsymbol \xi })}\right)}^{{\alpha }_i+{\alpha }_3-1} e^{-\lambda H(x_i;{\boldsymbol \xi })}}{({\alpha }_{?2}+{\alpha }_3)\ {\left(1-e^{-\lambda H(x_j;{\boldsymbol \xi })}\right)}^{{\alpha }_3} },\\
&&
f^{\left(2\right)}_{X_i|X_j}\left(x_i|x_j\right)={\alpha }_i\lambda \ h\left(x_i;{\boldsymbol \xi }\right){\left(1-e^{-\lambda H(x_i;{\boldsymbol \xi })}\right)}^{{\alpha }_i-1} e^{-\lambda H(x_i;{\boldsymbol \xi })}\\
&&f^{\left(3\right)}_{X_i|X_j}\left(x_i|x_j\right)=\frac{{\alpha }_3 }{{\alpha }_j+{\alpha }_3}{\left(1-e^{-\lambda H(x_i;{\boldsymbol \xi })}\right)}^{{\alpha }_i}.
\end{eqnarray*}
\end{theorem}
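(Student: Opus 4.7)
The plan is to apply the standard definition $f_{X_i\mid X_j}(x_i\mid x_j)=f_{X_1,X_2}(x_1,x_2)/f_{X_j}(x_j)$ piece by piece on each of the three regions appearing in the joint pdf from Theorem \ref{thm.fX1X2}. By Theorem \ref{thm.fXi}, the marginal density in the denominator is
$$f_{X_j}(x_j)=(\alpha_j+\alpha_3)\,\lambda\, h(x_j;{\boldsymbol \xi})\,e^{-\lambda H(x_j;{\boldsymbol \xi})}\bigl(1-e^{-\lambda H(x_j;{\boldsymbol \xi})}\bigr)^{\alpha_j+\alpha_3-1},$$
so everything reduces to routine algebraic simplification in the two absolutely continuous regions.

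For the first branch $0<x_i<x_j$, I would take the appropriate piece of \eqref{eq.fBEEW} (this is $f_1$ when $(i,j)=(1,2)$ and $f_2$ when $(i,j)=(2,1)$) and divide by $f_{X_j}(x_j)$. The factor $\lambda\, h(x_j;{\boldsymbol \xi})\,e^{-\lambda H(x_j;{\boldsymbol \xi})}$ cancels outright, while the exponents on $(1-e^{-\lambda H(x_j;{\boldsymbol \xi})})$ combine as $(\alpha_j-1)-(\alpha_j+\alpha_3-1)=-\alpha_3$, producing the denominator $(1-e^{-\lambda H(x_j;{\boldsymbol \xi})})^{\alpha_3}$ claimed in $f^{(1)}_{X_i\mid X_j}$, with the $\alpha_j/(\alpha_j+\alpha_3)$ constant falling out from the leading coefficient.

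For the second branch $0<x_j<x_i$, the cancellation is even more complete: the factor $(1-e^{-\lambda H(x_j;{\boldsymbol \xi})})^{\alpha_j+\alpha_3-1}$ now appears identically in the numerator piece and in $f_{X_j}$, so after also cancelling $\lambda h(x_j;{\boldsymbol \xi}) e^{-\lambda H(x_j;{\boldsymbol \xi})}$ and the constant $(\alpha_j+\alpha_3)$, one is left precisely with the univariate EEW$(\alpha_i,\lambda,{\boldsymbol \xi})$ density in $x_i$, which is $f^{(2)}_{X_i\mid X_j}$.

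The delicate step is the diagonal $x_i=x_j$. Here the BEEW distribution places a singular mass along $\{x_1=x_2\}$, so the conditional distribution of $X_i$ given $X_j=x_j$ has an atom at $x_i=x_j$; the quantity $f^{(3)}$ should therefore be interpreted as the probability of that atom, not as a density value. I would obtain it by evaluating the ratio $f_0(x)/f_{X_j}(x)$ from \eqref{eq.f0}: the $\lambda h(x;{\boldsymbol \xi}) e^{-\lambda H(x;{\boldsymbol \xi})}$ factors cancel, the $(\alpha_j+\alpha_3)$ pairs with $\alpha_3$ in the constant, and the exponent on $(1-e^{-\lambda H(x;{\boldsymbol \xi})})$ reduces from $\alpha_1+\alpha_2+\alpha_3-1-(\alpha_j+\alpha_3-1)=\alpha_i$, yielding exactly $(\alpha_3/(\alpha_j+\alpha_3))(1-e^{-\lambda H(x_j;{\boldsymbol \xi})})^{\alpha_i}$. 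As a sanity check I would confirm that $\int_0^{x_j}f^{(1)}\,dx_i+\int_{x_j}^\infty f^{(2)}\,dx_i+f^{(3)}=1$; the first integral evaluates using the EEW$(\alpha_i+\alpha_3,\lambda,{\boldsymbol \xi})$ cdf and the second using the EEW$(\alpha_i,\lambda,{\boldsymbol \xi})$ survival function, and the three terms telescope cleanly. The main obstacle is precisely this subtle interpretive point about the third case, since a mechanical differentiation of the joint cdf on the diagonal is ill-posed.
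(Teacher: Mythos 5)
Your proposal is correct and follows essentially the same route as the paper, which simply divides the joint pdf from Theorem \ref{thm.fX1X2} by the marginal pdf from Theorem \ref{thm.fXi} (note the paper's displayed relation \eqref{eq.fcond} has a typo, writing $f_{X_i}(x_i)$ in the denominator where $f_{X_j}(x_j)$ is meant, which you implicitly correct). Your additional care on the diagonal --- interpreting $f^{(3)}$ as the conditional atom at $x_i=x_j$ rather than a density value, and verifying that the three pieces sum to one --- goes beyond the paper's one-sentence argument and is a genuine improvement in rigor, but it is elaboration of the same computation rather than a different method.
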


\begin{proof}
The proof follows readily upon substituting the joint pdf of $(X_1,X_2)$ given in Theorem \ref{thm.fX1X2} and the marginal pdf of $X_j$, given in Theorem \ref{thm.fXi}, using the following relation
\begin{equation}\label{eq.fcond}
f_{X_i|X_j}\left(x_i|x_j\right)=\frac{f_{X_i,X_j}(x_i,x_j)}{f_{X_i}(x_i)},\ \ i=1,2.
\end{equation}
\end{proof}

\begin{prop} Since the joint sf and the joint cdf have the following relation
\begin{equation}\label{eq.sX1X2}
S_{X_1,X_2}\left(x_1,x_2\right)=1-F_{X_1}\left(x_1\right)-F_{X_2}\left(x_2\right)+F_{X_1,X_2}\left(x_1,x_2\right),
\end{equation}
therefore, the joint sf of $X_1$ and $X_2$ also can be expressed in a compact form.
\end{prop}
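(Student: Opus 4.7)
The proposition is really just an application of the standard inclusion-exclusion identity for two events, combined with the marginal and joint cdf expressions already established. The plan is as follows.

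First I would write the joint survival function as $S_{X_1,X_2}(x_1,x_2)=P(X_1>x_1,X_2>x_2)$ and observe that the complementary event is $\{X_1\le x_1\}\cup\{X_2\le x_2\}$. Applying the two-event inclusion-exclusion formula to this union gives
\begin{equation*}
P(\{X_1\le x_1\}\cup\{X_2\le x_2\})=F_{X_1}(x_1)+F_{X_2}(x_2)-F_{X_1,X_2}(x_1,x_2),
\end{equation*}
and taking the complement yields exactly \eqref{eq.sX1X2}. No measurability or continuity issues arise because all cdfs involved are well defined on $(0,\infty)^2$.

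Second, to see that this is indeed a compact closed-form expression, I would substitute the marginals from Theorem \ref{thm.fXi}, namely $F_{X_i}(x_i)=(1-e^{-\lambda H(x_i;\boldsymbol\xi)})^{\alpha_i+\alpha_3}$ for $i=1,2$, together with the joint cdf \eqref{eq.FBEEW}, into \eqref{eq.sX1X2}. Writing $z=\min\{x_1,x_2\}$, this gives
\begin{equation*}
S_{X_1,X_2}(x_1,x_2)=1-\bigl(1-e^{-\lambda H(x_1;\boldsymbol\xi)}\bigr)^{\alpha_1+\alpha_3}-\bigl(1-e^{-\lambda H(x_2;\boldsymbol\xi)}\bigr)^{\alpha_2+\alpha_3}+\bigl(1-e^{-\lambda H(x_1;\boldsymbol\xi)}\bigr)^{\alpha_1}\bigl(1-e^{-\lambda H(x_2;\boldsymbol\xi)}\bigr)^{\alpha_2}\bigl(1-e^{-\lambda H(z;\boldsymbol\xi)}\bigr)^{\alpha_3},
\end{equation*}
which is indeed a closed-form expression in the baseline function $H$.

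There is essentially no obstacle here; the only thing worth being careful about is not to confuse the joint survival function with the tail dependence product $\bar F_{X_1}(x_1)\bar F_{X_2}(x_2)$, which is \emph{not} equal to $S_{X_1,X_2}(x_1,x_2)$ in two or more dimensions. Using the inclusion-exclusion identity makes this point automatic, since the derivation requires only that $F_{X_1}$, $F_{X_2}$ and $F_{X_1,X_2}$ are the honest (joint) cdfs, all of which have already been identified in the previous theorems.
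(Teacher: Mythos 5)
Your proposal is correct and follows exactly the route the paper intends: the paper simply asserts the inclusion--exclusion identity \eqref{eq.sX1X2} and leaves the substitution implicit, while you justify the identity via the complement of $\{X_1\le x_1\}\cup\{X_2\le x_2\}$ and then explicitly plug in the marginals from Theorem \ref{thm.fXi} and the joint cdf \eqref{eq.FBEEW} to display the compact form. Your closing caution that $S_{X_1,X_2}\ne \bar F_{X_1}\bar F_{X_2}$ is a sensible addition but does not change the argument.
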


\begin{prop}
\cite{basu-71}
  defined the bivariate failure rate function $h_{X_1,X_2}\left(x_1,x_2\right)$ for the random vector $ (X_1,X_2 )$ as the following relation
\begin{equation}\label{eq.hX1X2}
h_{X_1,X_2}\left(x_1,x_2\right)=\frac{f_{X_1,X_2}\left(x_1,x_2\right)}{S_{X_1,X_2}\left(x_1,x_2\right)}.
\end{equation}
We can obtained the bivariate failure rate function $h_{X_1,X_2}(x_1,x_2)$ for the random vector $(X_1,$ $X_2)$
by substituting from \eqref{eq.fBEEW} and \eqref{eq.sX1X2} in \eqref{eq.hX1X2}.
\end{prop}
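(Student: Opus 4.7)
The proposition amounts to a direct substitution of the joint density from \eqref{eq.fBEEW} and the joint survival function from \eqref{eq.sX1X2} into the ratio defining $h_{X_1,X_2}$, so the plan is to assemble each ingredient explicitly and divide branch by branch. First I would record the marginal cdfs supplied by Theorem \ref{thm.fXi}, namely
\[
F_{X_i}(x_i)=\left(1-e^{-\lambda H(x_i;\boldsymbol\xi)}\right)^{\alpha_i+\alpha_3},\qquad i=1,2,
\]
and substitute them together with the two-branch expression \eqref{eq.FBEEW2} for $F_{X_1,X_2}(x_1,x_2)$ into \eqref{eq.sX1X2}. This yields a closed-form $S_{X_1,X_2}(x_1,x_2)$ split according to the sign of $x_1-x_2$, whose building blocks are powers of $(1-e^{-\lambda H(x_i;\boldsymbol\xi)})$.

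Next I would import the three-branch form of $f_{X_1,X_2}(x_1,x_2)$ from \eqref{eq.fBEEW}: the absolutely continuous pieces $f_1$ on $\{x_1<x_2\}$ and $f_2$ on $\{x_2<x_1\}$, and the singular contribution $f_0$ concentrated along the diagonal. Applying \eqref{eq.hX1X2} region by region then produces the required $h_{X_1,X_2}(x_1,x_2)$ as a three-piece function, each piece being a ratio of explicit products of $h(x_i;\boldsymbol\xi)$, exponentials $e^{-\lambda H(x_i;\boldsymbol\xi)}$, and powers of $(1-e^{-\lambda H(x_i;\boldsymbol\xi)})$ to a corresponding sum-and-difference of such powers coming from $S_{X_1,X_2}$. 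Since both numerator and denominator are already in closed form, no further simplification is required to establish the claim; one only cancels common exponential factors where convenient.

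The only conceptual wrinkle I anticipate is the diagonal $\{x_1=x_2=x\}$, where $f_{X_1,X_2}$ carries a singular component, so that Basu's ratio must be read in the distributional sense. I would therefore note that on this line the numerator is $f_0(x)$ from \eqref{eq.f0}, while the denominator collapses to
\[
S_{X_1,X_2}(x,x)=1-\left(1-e^{-\lambda H(x;\boldsymbol\xi)}\right)^{\alpha_1+\alpha_3}-\left(1-e^{-\lambda H(x;\boldsymbol\xi)}\right)^{\alpha_2+\alpha_3}+\left(1-e^{-\lambda H(x;\boldsymbol\xi)}\right)^{\alpha_1+\alpha_2+\alpha_3},
\]
obtained by matching the two branches of \eqref{eq.FBEEW2} at $x_1=x_2$. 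This is the only step demanding more than mechanical substitution; in the two open regions the proposition reduces to writing the quotient out in closed form, which completes the proof.
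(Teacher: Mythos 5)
Your proposal is correct and follows exactly the route the paper intends: the proposition is established by direct substitution of the joint pdf \eqref{eq.fBEEW} and the survival function obtained from \eqref{eq.sX1X2} into Basu's ratio \eqref{eq.hX1X2}, handled branch by branch, and the paper itself offers no argument beyond this. Your additional remark about interpreting the ratio on the diagonal, where the density carries a singular component, is a sensible clarification but does not change the approach.
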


\begin{lemma} The cdf of $Y=\max  \{X_1,X_2\}$ is given as
\begin{equation}\label{eq.Fmax}
F_Y\left(y\right)={\left(1-e^{-\lambda H\left(y;{\boldsymbol \xi }\right)}\right)}^{{\alpha }_1+{\alpha }_2+{\alpha }_3}.
\end{equation}
\end{lemma}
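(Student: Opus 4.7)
The plan is to reduce $Y$ to a maximum of the three independent latent variables $U_1,U_2,U_3$, then multiply. Using the defining representation $X_1=\max\{U_1,U_3\}$ and $X_2=\max\{U_2,U_3\}$, one has the pointwise identity
$$Y=\max\{X_1,X_2\}=\max\{\max\{U_1,U_3\},\max\{U_2,U_3\}\}=\max\{U_1,U_2,U_3\},$$
since the maximum over two overlapping pairs equals the maximum over their union (the shared entry $U_3$ simply contributes once).

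Once this reduction is in hand, independence of the $U_i$ gives
$$F_Y(y)=P(U_1\le y)\,P(U_2\le y)\,P(U_3\le y)=\prod_{i=1}^{3}F_{\rm EEW}(y;\alpha_i,\lambda,\boldsymbol{\xi}),$$
and substituting from \eqref{eq.FEEW}, the three factors share the common base $1-e^{-\lambda H(y;\boldsymbol{\xi})}$, so the product collapses to $(1-e^{-\lambda H(y;\boldsymbol{\xi})})^{\alpha_1+\alpha_2+\alpha_3}$, which is the claimed formula.

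As an alternative route one can observe that $\{Y\le y\}=\{X_1\le y,X_2\le y\}$, so $F_Y(y)=F_{\rm BEEW}(y,y)$; evaluating \eqref{eq.FBEEW} at $x_1=x_2=y$ forces $z=\min\{y,y\}=y$ and immediately yields the same collapsed product. There is no real obstacle here: the result is a one-line consequence of the construction, and the only point worth flagging is the justification of the nested-max simplification, which is a deterministic identity rather than a probabilistic argument.
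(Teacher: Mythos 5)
Your proposal is correct and follows essentially the same route as the paper: reduce the event $\{Y\le y\}$ to $\{U_1\le y,\,U_2\le y,\,U_3\le y\}$ (equivalently, $\max\{X_1,X_2\}=\max\{U_1,U_2,U_3\}$) and invoke independence of the $U_i$ to collapse the product of the three EEW cdfs into a single power. The alternative observation $F_Y(y)=F_{\mathrm{BEEW}}(y,y)$ is a harmless restatement of the same computation.
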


\begin{proof}
Since
\begin{eqnarray*}
F_Y\left(y\right)&=&P\left(Y\le y\right)=P\left({\max  \{X_1,X_2\} }\le y\right)\\
&=&P\left(X_1\le y,X_2\le y\right)=P\left({\max  \{U_1,U_3\}\ }\le y,{\max  \{U_2,U_3\}\ }\le y\right)\\
&=&P\left(U_1\le y,U_2\le y,U_3\le y\right),
\end{eqnarray*}
where the random variables $U_i\ (i=1,2,3)$ are mutually independent, we directly obtain the result.
\end{proof}


\begin{lemma}
 The cdf of $T={\min  \{X_1,X_2\}}$ is given as
\[F_T\left(t\right)={\left(1-e^{-\lambda H\left(t;{\boldsymbol \xi }\right)}\right)}^{{\alpha }_1+{\alpha }_3}+{\left(1-e^{-\lambda H\left(t;{\boldsymbol \xi }\right)}\right)}^{{\alpha }_2+{\alpha }_3}-{\left(1-e^{-\lambda H\left(t;{\boldsymbol \xi }\right)}\right)}^{{\alpha }_1+{\alpha }_2+{\alpha }_3}.\]
\end{lemma}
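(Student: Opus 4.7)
The plan is to reduce $F_T(t) = P(\min\{X_1,X_2\} \le t)$ to quantities already available, via the inclusion--exclusion identity
\begin{equation*}
F_T(t) = P(X_1 \le t) + P(X_2 \le t) - P(X_1 \le t,\, X_2 \le t).
\end{equation*}
The first two terms are the marginal cdfs of $X_1$ and $X_2$; by Theorem \ref{thm.fXi} these are ${\rm EEW}(\alpha_1 + \alpha_3, \lambda, {\boldsymbol \xi})$ and ${\rm EEW}(\alpha_2 + \alpha_3, \lambda, {\boldsymbol \xi})$, giving $(1 - e^{-\lambda H(t;{\boldsymbol \xi})})^{\alpha_i + \alpha_3}$ for $i=1,2$. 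The joint term is the joint cdf \eqref{eq.FBEEW} evaluated on the diagonal $x_1 = x_2 = t$, where $z = \min\{t,t\} = t$ makes all three factors share the same base and combine into $(1 - e^{-\lambda H(t;{\boldsymbol \xi})})^{\alpha_1 + \alpha_2 + \alpha_3}$. Substituting yields the stated formula at once.

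As a cross-check (or an alternative proof that bypasses the earlier theorems), I would use the distributive identity
\[
\min\{\max\{U_1,U_3\},\, \max\{U_2,U_3\}\} = \max\{U_3,\, \min\{U_1,U_2\}\},
\]
so that independence of $U_3$ from $(U_1,U_2)$ gives $F_T(t) = P(U_3 \le t)\, P(\min\{U_1,U_2\} \le t)$. Writing $v = 1 - e^{-\lambda H(t;{\boldsymbol \xi})}$, the first factor is $v^{\alpha_3}$ and the second is $1 - (1 - v^{\alpha_1})(1 - v^{\alpha_2}) = v^{\alpha_1} + v^{\alpha_2} - v^{\alpha_1 + \alpha_2}$; the product reproduces the claimed expression.

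There is no substantive obstacle: the result is essentially a one-line consequence of inclusion--exclusion once the marginals and the diagonal value of the joint cdf are in hand. The only subtle point worth flagging is that on the diagonal the factor $(1 - e^{-\lambda H(z;{\boldsymbol \xi})})^{\alpha_3}$ in \eqref{eq.FBEEW} merges with the other two factors into a single power of $\alpha_1 + \alpha_2 + \alpha_3$, rather than contributing separately; this is what turns the three-term cdf into the clean three-term expression for $F_T$.
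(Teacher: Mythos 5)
Your main argument is correct and is essentially the paper's own proof: the paper derives $F_T(t)=1-S_{X_1,X_2}(t,t)$ from the survival-function relation \eqref{eq.sX1X2} (which is exactly your inclusion--exclusion identity) together with \eqref{eq.Fmax} for the diagonal value of the joint cdf. Your second, distributive-lattice cross-check is a pleasant extra but not a different route in any substantive sense.
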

\begin{proof}
It is easy to prove that by using Equations \eqref{eq.sX1X2} and \eqref{eq.Fmax}.
\end{proof}

\section{Special cases}
\label{sec.sp}
In this Section, we consider some special cases of the BEEW distributions.

\subsection{Bivariate generalized exponential distribution}

If $H\left(x;{\boldsymbol \xi }\right)=x,$ then the joint cdf \eqref{eq.FBEEW2} becomes
\[F\left(x_1,\ x_2\right)=\left\{ \begin{array}{ll}
{\left(1-e^{-\lambda x_1}\right)}^{{\alpha }_1+{\alpha }_3}{\left(1-e^{-\lambda x_2}\right)}^{{\alpha }_2} & {\rm if}\ \ \  x_1\le x_2 \\
{\left(1-e^{-\lambda x_1}\right)}^{{\alpha }_1}{\left(1-e^{-\lambda x_2}\right)}^{{\alpha }_2+{\alpha }_3} & {\rm if}\ \ \  x_1>x_2, \end{array}
\right.\]
which is the joint cdf of bivariate generalized exponential (BGE) distribution introduced by
\cite{ku-gu-09-BGE}.
By Theorem
\ref{eq.FXi},
the marginal distributions of $X_1$ and $X_2$ are ${\rm GE}({\alpha }_1+{\alpha }_3,\lambda )$ and ${\rm GE}\left({\alpha }_2+{\alpha }_3,\lambda \right),$ respectively.

\subsection{ Bivariate generalized linear failure rate distribution}

If $H\left(x;{\boldsymbol \xi }\right)={\beta }x+\frac{\gamma }{2 }x^2$ and $\lambda=1$, then the joint cdf \eqref{eq.FBEEW2} becomes
\[F\left(x_1, x_2\right)=\left\{ \begin{array}{cc}
{\left(1-e^{-\beta x_1-\frac{\gamma }{2}x^2_1}\right)}^{{\alpha }_1+{\alpha }_3}{\left(1-e^{-\beta x_2-\frac{\gamma }{2}x^2_2}\right)}^{{\alpha }_2} & {\rm if}\ \ \  x_1\le x_2 \\
{\left(1-e^{-\beta x_1-\frac{\gamma }{2}x^2_1}\right)}^{{\alpha }_1}{\left(1-e^{-\beta x_2-\frac{\gamma }{2}x^2_2}\right)}^{{\alpha }_2+{\alpha }_3} & {\rm if}\ \ \  x_1>x_2, \end{array}
\right.\]
which is the joint cdf of bivariate generalized linear failure rate (BGLFR) distribution introduced by
\cite{sa-ha-sm-ku-11}.
 By Theorem \ref{eq.FXi}, the marginal distributions of $X_1$ and $X_2$ are ${\rm GLFR}({\alpha }_1+{\alpha }_3,\beta ,\gamma )$ and ${\rm GLFR}\left({\alpha }_2+{\alpha }_3,\beta ,\gamma \right),$ respectively.

\subsection{Bivariate exponentiated Weibull distribution}


If ${ H}\left({ x};\boldsymbol\xi \right){ =}{{ x}}^{\beta },$ then the joint cdf \eqref{eq.FBEEW2} becomes
\[{ F}\left({{ x}}_{{ 1}}{ ,\ }{{ x}}_{{ 2}}\right){ =}\left\{ \begin{array}{ll}
{\left({ 1-}{{ e}}^{{ -}\lambda {{ x}}^{\beta }_{{ 1}}}\right)}^{{\alpha }_{{ 1}}{ +}{\alpha }_{{ 3}}}{\left({ 1-}{{ e}}^{{ -}\lambda {{ x}}^{\beta }_{{ 2}}}\right)}^{{\alpha }_{{ 2}}} & { if\ \ \ }{{ x}}_{{ 1}}\le {{ x}}_{{ 2}} \\
{\left({ 1-}{{ e}}^{{ -}\lambda {{ x}}^{\beta }_{{ 1}}}\right)}^{{\alpha }_{{ 1}}}{\left({ 1-}{{ e}}^{{ -}\lambda {{ x}}^{\beta }_{{ 2}}}\right)}^{{\alpha }_{{ 2}}{ +}{\alpha }_{{ 3}}} & { if\ \ \  }{{ x}}_{{ 1}}{ >}{{ x}}_{{ 2}}. \end{array}
\right.\]
We call  this, bivariate exponentiated Weibull (BEW) distribution. By Theorem \ref{eq.FXi}, the marginal distributions of $X_1$ and $X_2$ are ${\rm EW}({\alpha }_1+{\alpha }_3,\lambda ,\beta )$ and ${\rm EW}\left({\alpha }_2+{\alpha }_3,\lambda ,\beta \right),$ respectively.
\subsection{ Bivariate generalized Gompertz distribution}

If $H\left(x;{\boldsymbol \xi }\right)={\beta }^{-1}(e^{\beta x}-1),$ then the joint cdf \eqref{eq.FBEEW2} becomes
\[F\left(x_1,\ x_2\right)=\left\{ \begin{array}{cc}
{\left(1-e^{-{\lambda \beta }^{-1}(e^{\beta x_1}-1)}\right)}^{{\alpha }_1+{\alpha }_3}{\left(1-e^{-{\lambda \beta }^{-1}(e^{\beta x_2}-1)}\right)}^{{\alpha }_2} & {\rm if}\ \ \ \ \ x_1\le x_2 \\
{\left(1-e^{-{\lambda \beta }^{-1}(e^{\beta x_1}-1)}\right)}^{{\alpha }_1}{\left(1-e^{-{\lambda \beta }^{-1}(e^{\beta x_2}-1)}\right)}^{{\alpha }_2+{\alpha }_3} & {\rm if}\ \ \ \ \ x_1>x_2, \end{array}
\right.\]
which is the joint cdf of bivariate generalized Gompertz (BGG) distribution introduced by
\cite{elSh-ib-be-13}.
By Theorem \ref{eq.FXi}, the marginal distributions of $X_1$ and $X_2$ are ${\rm GG}({\alpha }_1+{\alpha }_3,\lambda ,\beta )$ and ${\rm GG}\left({\alpha }_2+{\alpha }_3,\lambda ,\beta \right),$ respectively.

\subsection{ Bivariate exponentiated generalized Weibull-Gompertz distribution}

If $H\left(x;{\boldsymbol \xi }\right)=x^{\beta }(e^{\gamma x^{\delta }}-1),$ then the joint cdf \eqref{eq.FBEEW2} becomes
\[F\left(x_1,\ x_2\right)=\left\{ \begin{array}{ll}
{\left(1-e^{-\lambda x^{\beta }_1(e^{\gamma x^{\delta }_1}-1)}\right)}^{{\alpha }_1+{\alpha }_3}{\left(1-e^{-\lambda x^{\beta }_2(e^{\gamma x^{\delta }_2}-1)}\right)}^{{\alpha }_2} & {\rm if}\ \ \  x_1\le x_2 \\
{\left(1-e^{-\lambda x^{\beta }_1(e^{\gamma x^{\delta }_1}-1)}\right)}^{{\alpha }_1}{\left(1-e^{-\lambda x^{\beta }_2(e^{\gamma x^{\delta }_2}-1)}\right)}^{{\alpha }_2+{\alpha }_3} & {\rm if}\ \ \  x_1>x_2, \end{array}
\right.\]
which is the joint cdf of bivariate exponentiated generalized Weibull-Gompertz (BEGWG) distribution introduced by
\cite{elba-elda-mu-el-15}.
By Theorem \ref{eq.FXi}, the marginal distributions of $X_1$ and $X_2$ are ${\rm EGWG}({\alpha }_1+{\alpha }_3,\lambda ,\beta ,\gamma ,\delta )$ and ${\rm EGWG}\left({\alpha }_2+{\alpha }_3,\lambda ,\beta ,\gamma ,\delta \right),$ respectively.

\subsection{ Bivariate exponentiated modified Weibull extension distribution}
If $H\left(x;{\boldsymbol \xi }\right)=\beta (e^{{(x/\beta )}^{\gamma }}-1),$ then the joint cdf \eqref{eq.FBEEW2} becomes
\[F\left(x_1,\ x_2\right)=\left\{ \begin{array}{ll}
{\left(1-e^{-\lambda \beta (e^{(x_1/\beta )^{\gamma }}-1)}\right)}^{{\alpha }_1+{\alpha }_3}{\left(1-e^{-\lambda \beta (e^{{(x_2/\beta )}^{\gamma }}-1)}\right)}^{{\alpha }_2} & {\rm if}\ \ \ \ \ x_1\le x_2 \\
{\left(1-e^{-\lambda \beta (e^{{(x_1/\beta )}^{\gamma }}-1)}\right)}^{{\alpha }_1}{\left(1-e^{-\lambda \beta ({??}^{{(x_2/\beta )}^{\gamma }}-1)}\right)}^{{\alpha }_2+{\alpha }_3} & {\rm if}\ \ \ \ \ x_1>x_2, \end{array}
\right.\]
which is the joint cdf of bivariate exponentiated modified Weibull extension (BEMWE) distribution introduced by
\cite{elgo-elmo-15}.
By Theorem \ref{eq.FXi}, the marginal distributions of $X_1$ and $X_2$ are ${\rm EMWE}({\alpha }_1+{\alpha }_3,\lambda ,\beta ,\gamma )$ and ${\rm EMWE}\left({\alpha }_2+{\alpha }_3,\lambda ,\beta ,\gamma \right),$ respectively.

\section{ Maximum likelihood estimation}
\label{sec.mle}
In this section, we first study the maximum likelihood estimations (MLE's) of the parameters. Then, we propose an Expectation-Maximization (EM) algorithm to estimate the parameters.

Let $\left(x_{11},x_{12}\right),\dots ,\left(x_{1n},x_{2n}\right)$ be an observed sample with size $n$ from BEEW distribution with parameters ${\boldsymbol \Theta }=\left({\alpha }_1,{\alpha }_2,{\alpha }_3,\lambda ,{\boldsymbol \zeta }\right)'$. Also, consider
\[I_0=\left\{i:x_{1i}=x_{2i}=x_i\right\},\ \ \ \ \ \ \ I_1=\left\{i:x_{1i}<x_{2i}\right\},\ \ \ \ \ \ I_2=\left\{i:x_{1i}>x_{2i}\right\},\ \ \ i=1,\dots ,n,\]
and
\[n_0=\left|I_0\right|,\ \ \ \ \ \ n_1=\left|I_1\right|,\ \ \ \ \ \ n_2=\left|I_2\right|,\ \ \ \ \ \ n=n_0+n_1+n_2.\]
Therefore, the log-likelihood function can be written as
\begin{eqnarray}\label{eq.lik}
\ell \left({\boldsymbol \Theta }\right)&=&\sum_{i\in I_1}{{\log  \left(f_1\left(x_{1i},x_{2i}\right)\right)\ }}+\sum_{i\in I_2}{{\log  \left(f_{{\rm 2}}\left(x_{1i},x_{2i}\right)\right) }}+\sum_{i\in I_0}{{\log  \left(f_0\left(x_i\right)\right) }}\nonumber\\
&=&
\left(2n_1+2n_2+n_0\right){\log  \left(\lambda \right)\ }+n_1{\log  \left({\alpha }_2\right) }+n_2{\log  \left({\alpha }_1\right) }+n_0{\log  \left({\alpha }_3\right) }\nonumber\\
&&+n_1{\log  \left({\alpha }_1+{\alpha }_3\right) }+n_2{\log  \left({\alpha }_2+{\alpha }_3\right)\ }+\sum_{i\in I_1\cup I_2}{{\log  \left(h\left(x_{1i};{\boldsymbol \xi }\right)\right) }}\nonumber\\
&&
+\sum_{i\in I_1\cup I_2}{{\log  \left(h\left(x_{2i};{\boldsymbol \xi }\right)\right) }}
+\sum_{i\in I_0}{{\log  \left(h\left(x_i;{\boldsymbol \xi }\right)\right) }}\nonumber\\
&&
+\left({\alpha }_1+{\alpha }_3-1\right)\left(\sum_{i\in I_1}{{\log  \left(1-e^{-\lambda H(x_{1i};{\boldsymbol \xi })}\right)\ }}+\sum_{i\in I_2}{{\log  \left(1-e^{-\lambda H(x_{2i};{\boldsymbol \xi })}\right) }}\right)\nonumber\\
&&
+\left({\alpha }_2-1\right)\sum_{i\in I_1}{{\log  \left(1-e^{-\lambda H\left(x_{2i};{\boldsymbol \xi }\right)}\right) }}
+\left({\alpha }_1-1\right)\sum_{i\in I_2}{{\log  \left(1-e^{-\lambda H\left(x_{1i};{\boldsymbol \xi }\right)}\right) }}\nonumber\\
&&
+\left({\alpha }_1+{\alpha }_2+{\alpha }_3-1\right)\sum_{i\in I_0}{{\log  \left(1-e^{-\lambda H\left(x_i;{\boldsymbol \xi }\right)}\right) }}\nonumber\\
&&
+\lambda \left(\sum_{i\in I_0}{x_i}+\sum_{i\in I_1\cup I_2}{x_{1i}}+\sum_{i\in I_1\cup I_2}{x_{2i}}\right),
\end{eqnarray}
where  $f_1$,  $f_2$ and $f_0$ are given in \eqref{eq.f1}, \eqref{eq.f2} and \eqref{eq.f0}, respectively.
 We can obtain the MLE's of the parameters by maximizing $\ell \left({\boldsymbol \Theta }\right)$ in \eqref{eq.lik} with respect to the unknown parameters. This is clearly a $(k+4)$-dimensional problem. However, no explicit expressions are available for the MLE's. We need to solve $(k+4)$ non-linear equations simultaneously, which may not be very simple. Therefore, we present an expectation-maximization (EM) algorithm to find the MLE's of parameters.
It may be noted that if instead of $(X_1,X_2)$, we observe $U_1$, $U_2$, and $U_3$, the MLE's of the parameters can be obtained by solving a two-dimensional optimization process, which is clearly much more convenient than solving a $(k+4)$-dimensional optimization process. For this reason, we treat this problem as a missing value problem.

Assumed that for the bivariate random vector$\ \left(X_1,X_2\right)$, there is an associated random vectors
\[{\Lambda }_1=\left\{ \begin{array}{ll}
0 & U_1>U_3 \\
1 & U_1<U_3 \end{array}
\right.\ \ \ \ \ \ \ \ {\rm and}\ \ \ \ \ \ \ {\Lambda }_2=\left\{ \begin{array}{cc}
0 & U_2>U_3 \\
1 & U_2<U_3. \end{array}
\right.\ \]

Note that if $X_1=X_2$, then${{\rm \ }\Lambda }_1={\Lambda }_2=0$. But if $X_1<X_2$ or $X_1>X_2$, then $({\Lambda }_1,{\Lambda }_2)$ is missing. If $\left(X_1,X_2\right)\in I_1$ then the possible values of $({\Lambda }_1,{\Lambda }_2)$ are $\left(1,0\right)$ or $(1,1)$, and If $\left(X_1,X_2\right)\in I_2$ then the possible values of $({\Lambda }_1,{\Lambda }_2)$ are $\left(0,1\right)$ or $(1,1)$ with non-zero probabilities.

Now, we are in a position to provide the EM algorithm. In the E-step of the EM-algorithm, we treat it as complete observation when they belong to $I_0$. If the observation belong to $I_1$, we form the `pseudo' log-likelihood function by fractioning $(x_1,x_2)$ to two partially complete ``pseudo'' observations of the form $(x_1,x_2,u_1\left({\boldsymbol \Theta }\right))$ and $(x_1,x_2,u_2\left({\boldsymbol \Theta }\right))$, where $u_1\left({\boldsymbol \Theta }\right)$ and $u_2\left({\boldsymbol \Theta }\right)$ are the conditional probabilities that $({\Lambda }_1,{\Lambda }_2)$ takes values $\left(1,0\right)$ and $(1,1)$, respectively. It is clear that
\[u_1\left({\boldsymbol \Theta }\right)=\frac{{\alpha }_1}{{\alpha }_1+{\alpha }_3},\ \ \ \ \ \ \ u_2\left({\boldsymbol \Theta }\right)=\frac{{\alpha }_3}{{\alpha }_1+{\alpha}_3}.\]

Similarly, If the observation belong to $I_2$, we form the `pseudo' log-likelihood function of the from $\left(y_1,y_2,v_1\left({\boldsymbol \Theta }\right)\right)$ and $\left(x_1,x_2,v_2\left({\boldsymbol \Theta }\right)\right)$, where $v_1\left({\boldsymbol \Theta }\right)$ and $v_2\left({\boldsymbol \Theta }\right)$ are the conditional probabilities that $({\Lambda }_1,{\Lambda }_2)$ takes values $\left(0,1\right)$ and $(1,1)$, respectively. Therefore,
\[v_1\left({\boldsymbol \Theta }\right)=\frac{{\alpha }_2}{{\alpha }_2+{\alpha }_3},\ \ \ \ \ \ \ v_2\left({\boldsymbol \Theta }\right)=\frac{{\alpha }_3}{{\alpha }_2+{\alpha }_3}.\ \]
For brevity, we write  $u_1\left({\boldsymbol \Theta }\right)$, $u_2\left({\boldsymbol \Theta }\right)$, $v_1\left({\boldsymbol \Theta }\right)$, $v_2\left({\boldsymbol \Theta }\right)$ as $u_1$, $u_2$, $v_1$, $v_2$, respectively.

\bigskip
\noindent \textbf{E-step:} Consider $b_i=E(N|y_{1i},y_{2i},{\boldsymbol \Theta })$. The log-likelihood function without the additive constant can be written as follows:
\begin{eqnarray*}
{\ell }_{{\rm pseudo}}\left({\boldsymbol \Theta }\right)&=& \left(n_0+2n_1+2n_2\right){\log  \left(\lambda \right) }+\left(u_1n_1+n_2\right){\log  \left({\alpha }_1\right)\ }+\left(n_1+v_1n_2\right){\log  \left({\alpha }_2\right) }\\
&&
+\left(n_0+u_2n_1+v_2n_2\right){\log  \left({\alpha }_3\right) }+\sum_{i\in I_0}{{\log  \left(h\left(x_i;{\boldsymbol \xi }\right)\right) }}+\sum_{i\in I_1\cup I_2}{{\log  \left(h\left(x_{1i};{\boldsymbol \xi }\right)\right) }}\\
&&
+\sum_{i\in I_1\cup I_2}{{\log  \left(h\left(x_{2i};{\boldsymbol \xi }\right)\right) }}+\left({\alpha }_1+{\alpha }_2+{\alpha }_3-1\right)\sum_{i\in I_0}{{\log  \left(1-e^{-\lambda H\left(x_i;{\boldsymbol \xi }\right)}\right) }}\\
&&
+\left({\alpha }_1+{\alpha }_3-1\right)\sum_{i\in I_1}{{\log  \left(1-e^{-\lambda H(x_{1i};{\boldsymbol \xi })}\right) }}\\
&&
+\left({\alpha }_2+{\alpha }_3-1\right)\sum_{i\in I_2}{{\log  \left(1-e^{-\lambda H(x_{2i};{\boldsymbol \xi })}\right)}}\\
&&
+\left({\alpha }_2-1\right)\sum_{i\in I_1}{{\log  \left(1-e^{-\lambda H(x_{2i};{\boldsymbol \xi })}\right) }}+\left({\alpha }_1-1\right)\sum_{i\in I_2}{{\log  \left(1-e^{-\lambda H(x_{1i};{\boldsymbol \xi })}\right) }}\\
&&
-\lambda \left(\sum_{i\in I_0}{H\left(x_i;{\boldsymbol \xi }\right)}+\sum_{i\in I_1\cup I_2}{H\left(x_{1i};{\boldsymbol \xi }\right)}+\sum_{i\in I_1\cup I_2}{H\left(x_{2i};{\boldsymbol \xi }\right)}\right)
\end{eqnarray*}

\bigskip
\noindent\textbf{M-step:} At this step, ${\ell }_{{\rm pseudo}}\left({\boldsymbol \Theta }\right)$ is maximized with respect to ${\alpha }_1,{\alpha }_2,{\alpha }_3,\lambda $ and ${\boldsymbol \xi }$. For fixed $\lambda $ and ${\boldsymbol \xi }$, the maximization occurs at
\begin{eqnarray}
{\hat{\alpha }}_1\left(\lambda ,{\boldsymbol \xi }\right)&=&\frac{u_1n_1+n_2}{\sum_{i\in I_0}{{W(x_i) }}+\sum_{i\in I_1\cup I_2}{{
W(x_{1i}) }}},\label{eq.a1hat}\\
{\hat{\alpha }}_2\left(\lambda \right)&=&\frac{n_1+v_1n_2}{\sum_{i\in I_0}{{W(x_i)} }+\sum_{i\in I_1\cup I_2}
{{W(x_{2i}) }}},\label{eq.a2hat}\\
\hat{\alpha }_3(\lambda )&=&\frac{n_0+u_2n_1+v_2n_2}{\sum_{i\in I_0}{{W(x_{i}) }}+\sum_{i\in I_1}{{W(x_{1i}) }}+\sum_{i\in I_2}{W(x_{2i}) }},\label{eq.a3hat}
\end{eqnarray}
where $W(x)=\log  \left(1-e^{-\lambda H\left(x;{\boldsymbol \xi }\right)}\right)$.
For fixed ${\alpha }_1,{\alpha }_2,{\alpha }_3$ and ${\boldsymbol \xi }$, ${\ell }_{{\rm pseudo}}\left({\boldsymbol \Theta }\right)$ is maximized with respect to $\lambda $ as a solution of the following equation:
\begin{equation}\label{eq.lamhat}
\frac{n_0+2n_1+2n_2}{{\rm g}(\lambda )}=\lambda,
\end{equation}
where
\begin{eqnarray*}
{\rm g}\left(\lambda \right)&=&-\left({\alpha }_1+{\alpha }_2+{\alpha }_3-1\right)\sum_{i\in I_0}{\frac{H\left(x_i;{\boldsymbol \xi }\right)e^{-\lambda H\left(x_i;{\boldsymbol \xi }\right)}}{1-e^{-\lambda H\left(x_i;{\boldsymbol \xi }\right)}}}\\
&&
-\left({\alpha }_1+{\alpha }_3-1\right)\sum_{i\in I_1}{\frac{H(x_{1i};{\boldsymbol \xi })e^{-\lambda H(x_{1i};{\boldsymbol \xi })}}{1-e^{-\lambda H(x_{1i};{\boldsymbol \xi })}}}
-\left({\alpha }_2+{\alpha }_3-1\right)\sum_{i\in I_2}{\frac{H(x_{2i};{\boldsymbol \xi })e^{-\lambda H(x_{2i};{\boldsymbol \xi })}}{1-e^{-\lambda H(x_{2i};{\boldsymbol \xi })}}}\\
&&
-\left({\alpha }_2-1\right)\sum_{i\in I_1}{\frac{H(x_{2i};{\boldsymbol \xi })e^{-\lambda H(x_{2i};{\boldsymbol \xi })}}{1-e^{-\lambda H(x_{2i};{\boldsymbol \xi })}}}
-\left({\alpha }_1-1\right)\sum_{i\in I_2}{\frac{H(x_{1i};{\boldsymbol \xi })e^{-\lambda H(x_{1i};{\boldsymbol \xi })}}{1-e^{-\lambda H(x_{1i};{\boldsymbol \xi })}}}\\
&&
+\sum_{i\in I_0}{H\left(x_i;{\boldsymbol \xi }\right)}
+\sum_{i\in I_1\cup I_2}{H\left(x_{1i};{\boldsymbol \xi }\right)}+\sum_{i\in I_1\cup I_2}{H\left(x_{2i};{\boldsymbol \xi }\right)}.
\end{eqnarray*}

Finally, for fixed ${\alpha }_1,{\alpha }_2,{\alpha }_3$ and $\lambda $, ${\ell }_{{\rm pseudo}}\left({\boldsymbol \Theta }\right)$ is maximized with respect to${\boldsymbol \ }{\boldsymbol \xi }$\textbf{ }as a solution of the following equation:
\begin{equation}\label{eq.xihat}
\frac{\partial }{\partial {\boldsymbol \xi }}\ {\ell }_{{\rm pseudo}}\left({\boldsymbol \Theta }\right)={\boldsymbol 0}.
\end{equation}

The following steps can be used to compute the MLE's of the parameters via the EM algorithm:

\noindent\textbf{Step 1}: Take some initial value of ${\boldsymbol \Theta }$\textbf{, }say ${{\boldsymbol \Theta }}^{{\rm (0)}}{\rm =}\left({\alpha }^{(0)}_1,{\alpha }^{(0)}_2,{\alpha }^{\left(0\right)}_3,{\lambda }^{\left(0\right)},{{\boldsymbol \xi }}^{(0)}\right)'$.

\noindent\textbf{Step 2}: Compute $u_1$, $u_2$, $v_1$, and $v_2$.

\noindent\textbf{Step 3: }Find $\hat{\lambda }$ by solving the equation \eqref{eq.lamhat}, say ${\hat{\lambda }}^{(1)}$.

\noindent\textbf{Step 4}: Find ${\boldsymbol \ }\hat{{\boldsymbol \xi }}$ by solving the equation \eqref{eq.xihat}, say ${\hat{{\boldsymbol \xi }}}^{(1)}$.

\noindent\textbf{Step 5}: Compute ${\hat{\alpha }}^{\left(1\right)}_i={\hat{\alpha }}_i({\hat{\lambda }}^{\left(1\right)},{\hat{{\boldsymbol \xi }}}^{(1)})$, $i=1,2,3$ from \eqref{eq.a1hat}-\eqref{eq.a3hat}.

\noindent\textbf{Step 6}: Replace ${{\boldsymbol \Theta }}^{{\rm (0)}}$ by ${\hat{{\boldsymbol \Theta }}}^{{\rm (1)}}{\rm =}\left({\hat{\alpha }}^{\left(1\right)}_1,{\hat{\alpha }}^{\left(1\right)}_2,{\hat{\alpha }}^{\left(1\right)}_3,{\hat{\lambda }}^{\left(1\right)},{\hat{{\boldsymbol \xi }}}^{\left(1\right)}\right)$, go back to step 1 and continue the process until convergence take place.

\section{Two real examples}
\label{sec.ex}

We consider BEEW distributions for fitting these two data sets. But, this family of distributions is a large class of distributions. Here, we consider six sub-models of BEEW distributions:  BGE, BGLFR, BEW, BGG, BEGWG, and BEMWE.  Some of them are suggested in literature.

Using the proposed EM algorithm, these models are fitted to the bivariate data set, and the MLE's and their corresponding log-likelihood values are calculated. The standard errors (s.e.) based on the observed information matrix are obtained.

For each fitted model, the Akaike Information Criterion (AIC), the corrected Akaike information criterion (AICC) and the Bayesian information criterion (BIC) are calculated. We also obtain the Kolmogorov-Smirnov (K-S) distances between the fitted distribution, the empirical distribution function, and the corresponding p-values (in parenthesis) for $X_1$, $X_2$ and ${\max  (X_1,X_2) }$.
Finally, we make use the likelihood ratio test (LRT) and the corresponding p-values for testing the BGE against other models.

\begin{example}

The data set is given from
\cite{meintanis-07}
and is obtained from the group stage of the UEFA Champion's League for the years 2004-05 and 2005-2006. In addition,
\cite{ku-gu-09-BGE}
and
\cite{sa-ha-sm-ku-11}
analyzed this data. The data represent the football (soccer) data where at least one goal scored by the home team and at least one goal scored directly from a kick goal (like penalty kick, foul kick or any other direct kick) by any team have been considered. Here $X_1$ represents the time in minutes of the first kick goal scored by any team and $X_2$ represents the first goal of any type scored by the home team.

The results are given in Table \ref{tab.ex1}. It can be concluded that all six models are appropriate for this data set. But, the BGW and BGG distributions are better than other distributions.
\end{example}

\begin{table}

\caption{The MLE's, log-likelihood, AIC, AICC, BIC, K-S, and LRT statistics for six sub-models of BEEW distribution of first data set.}\label{tab.ex1}
\center
\begin{tabular}{|c|c|c|c|c|c|c|} \hline
 &   \multicolumn{6}{|c|}{Distribution} \\ \hline
Statistic & BGE & BGLFR & BEW & BGG & BEWG & BEMWE \\ \hline
${\hat{\alpha }}_1$ & 1.4452 & 0.4920 & 0.2179 & 0.6596 & 0.2474 & 0.1574 \\
(s.e.) & (0.4160) & (0.0810) & (0.6663) & (0.2559) & (0.1185) & (0.2276) \\ \hline
${\hat{\alpha }}_2$ & 0.4681 & 0.1661 & 0.0770 & 0.2366 & 0.0896 & 0.0573 \\
(s.e.) & (0.1879) & (0.0535) & (0.2219) & (0.1093) & (0.0498) & (0.0833) \\ \hline
${\hat{\alpha }}_3$ & 1.1704 & 0.4110 & 0.1880 & 0.5821 & 0.2223 & 0.1419 \\
(s.e.) & (0.2866) & (0.0331) & (0.3446) & (0.1964) & (0.1016) & (0.2009) \\ \hline
$\hat{\lambda }$ & 0.0390 & --- & 1.914e-4 & 0.0098 & 0.1622 & 0.0246 \\
(s.e.) & (0.0056) & --- & (1.83e-5) & (0.0061) & (0.6398) & (0.0526) \\ \hline
$\hat{\beta }$ & --- & 1.990e-4 & 3.7136 & 0.0304 & 0.4168 & 85.9181 \\
(s.e.) & --- & 1.237e-4 & (0.2811) & (0.0112) & (0.9648) & (34.1193) \\ \hline
$\hat{\gamma}$ & --- & 7.971e-4 & --- & --- & 2.624e-5 & 4.5054 \\
(s.e.) & --- & 1.497e-4 & --- & --- & 7.304e-5 & (2.0339) \\ \hline
$\hat{\delta }$ & --- & --- & --- & --- & 2.4645 & --- \\
(s.e.) & --- & --- & --- & --- & (0.5969) & --- \\ \hline
$-{\log  (\ell )\ }$ & 296.901 & 293.376 & 291.681 & 291.855 & 291.132 & 290.981 \\
AIC  & 601.801 & 596.752 & 593.361 & 593.710 & 596.263 & 593.962 \\
AICC  & 603.051 & 598.688 & 595.297 & 595.646 & 600.125 & 596.762 \\
BIC & 608.245 & 604.807 & 601.416 & 601.765 & 607.540 & 603.628 \\ \hline
K-S ($X_1$) & 0.1034 & 0.07082 & 0.0962 & 0.1042 & 0.1140 & 0.1182 \\
(p-value) & (0.8240) & (0.9925) & (0.8829) & (0.8157) & (0.7218) & (0.6789) \\ \hline
K-S ($X_2$) & 0.1001 & 0.0968 & 0.1167 & 0.1243 & 0.1196 & 0.1187 \\
(p-value) & (0.8527) & (0.8786) & (0.6939) & (0.6161) & (0.6644) & (0.6738) \\ \hline
K-S (${\max  (X_1,X_2)\ }$) & 0.1431 & 0.1104 & 0.0942 & 0.0984 & 0.1272 & 0.1366 \\
(p-value) & (0.4344) & (0.7574) & 0.8978 & (0.8661) & (0.5865) & (0.4940) \\ \hline
LRT & --- & 7.050 & 10.440 & 10.092 & 11.538 & 11.840 \\
(p-value) & --- & (0.0079) & (0.0012) & (0.0015) & (0.0091) & (0.0026) \\ \hline
\end{tabular}
\end{table}

\begin{example}

The data set was first published in  `Washington Post' and is available in
\cite{cs-we-89}.
It is represent the American Football League for the matches on three consecutive weekends in 1986. Here, $X_1$ represents the `game time' to the first points scored by kicking the ball between goal posts, and represents the `game time' to the first points scored by moving the ball into the end zone.
\cite{ku-gu-10modified}
\cite{ja-ku-13weighted},
and
\cite{ba-sh-14class}
analyzed this data. We divided all the data by 100.
The results are given in Table \ref{tab.ex2}. It can be concluded that all six models are appropriate for this data set. But, the BGE distribution is better than other distributions.
\end{example}

\begin{table}

\caption{The MLE's, log-likelihood, AIC, AICC, BIC, K-S, and LRT statistics for six sub-models of BEEW distribution of second data set.}\label{tab.ex2}
\center

\begin{tabular}{|c|c|c|c|c|c|c|} \hline
 &  \multicolumn{6}{|c|}{Distribution} \\ \hline
Statistic & BGE & BGLFR & BEW & BGG & BEWG & BEMWE \\ \hline
${\hat{\alpha }}_1$ & 0.0921 & 0.0921 & 0.1367 & 0.0921 & 0.1501 & 0.1374 \\
(s.e.) & (0.0653) & (0.0667) & (0.1351) & (0.0653) & (0.2570) & (0.1355) \\ \hline
${\hat{\alpha }}_2$ & 0.5722 & 0.5722 & 0.8483 & 0.5722 & 0.9313 & 0.8523 \\
(s.e.) & (0.1614) & (0.1824) & (0.6283) & (0.1614) & (1.4720) & (0.6290) \\ \hline
${\hat{\alpha }}_3$ & 1.1519 & 1.1519 & 1.7113 & 1.1519 & 1.8788 & 1.7195 \\
(s.e.) & (0.2388) & (0.2945) & (1.2318) & (0.2388) & (2.9542) & (1.2328) \\ \hline
$\hat{\lambda }$ & 9.6187 & --- & 8.5587 & 9.6187 & 3.4632 & 3.0614 \\
(s.e.) & (1.5569) & --- & (1.9069) & (1.5590) & (2.8867) & (9.3275) \\ \hline
$\hat{\beta }$ & --- & 9.6187 & 0.8117 & 2.1e-12 & 0.5548 & 211.651 \\
(s.e.) & --- & (2.7455) & (0.2828) & (0.0455) & (0.0328) & (88.5725) \\ \hline
$\hat{\gamma }$ & --- & 2.351e-4 & --- & --- & 1.2553 & 0.8088 \\
(s.e.) & --- & 1.297e-4 & --- & --- & (0.8749) & (0.2814) \\ \hline
$\hat{\delta }$ & --- & --- & --- & --- & 0.1462 & --- \\
(s.e.) & --- & --- & --- & --- & (1.9357) & --- \\ \hline
${\log  (\ell )\ }$ & 36.670 & 36.670 & 36.857 & 36.670 & 36.859 & 36.857 \\
AIC & -65.340 & -63.340 & -63.714 & -63.340 & -59.717 & -61.714 \\
AICC & -64.258 & -61.673 & -62.048 & -61.673 & -56.423 & -59.314 \\
BIC & -58.389 & -54.652 & -55.026 & -54.651 & -47.553 & -51.288 \\ \hline
K-S ($X_1$) & 0.1808 & 0.1808 & 0.1678 & 0.1808 & 0.1679 & 0.1680 \\
(p-value) & (0.1282) & (0.1282) & (0.1872) & (0.1282) & (0.1869) & (0.1866) \\ \hline
K-S ($X_2$) & 0.1410 & 0.1411 & 0.1289 & 0.1410 & 0.1290 & 0.1291 \\
(p-value) & (0.3408) & (0.3408) & (0.4499) & (0.3408) & (0.4490) & (0.4484) \\ \hline
K-S (${\max  (X_1,X_2)\ }$) & 0.1350 & 0.1350 & 0.1197 & 0.1350 & 0.1198 & 0.1198 \\
(p-value) & (0.3929) & (0.3929) & (0.5438) & (0.3929) & (0.5428) & (0.5422) \\ \hline
LRT & --- & 0.000 & 0.374 & 0.000 & 0.378 & 0.374 \\
(p-value) & --- & 1.0000 & (0.5408) & 1.0000 & 0.9447 & 0.8294 \\ \hline
\end{tabular}

\end{table}

\section{Conclusions}
\label{sec.con}
In this paper we have introduced the bivariate exponentiated extended Weibull distribution whose marginals are exponentiated extended Weibull distributions. We discussed some statistical properties of the new bivariate model. Maximum likelihood estimates of the new class of distributions are discussed and we provided the observed Fisher information matrix. Two real data sets are used to show the usefulness of the new class.

\bibliographystyle{apa}

\begin{thebibliography}{}

\bibitem[\protect\astroncite{Almalki and Yuan}{2013}]{al-yu-13}
Almalki, S.~J. and Yuan, J. (2013).
\newblock A new modified {W}eibull distribution.
\newblock {\em Reliability Engineering \& System Safety}, 111:164--170.

\bibitem[\protect\astroncite{Balakrishna and Shiji}{2014}]{ba-sh-14class}
Balakrishna, N. and Shiji, K. (2014).
\newblock On a class of bivariate exponential distributions.
\newblock {\em Statistics \& Probability Letters}, 85:153--160.

\bibitem[\protect\astroncite{Barlow}{1968}]{barlow-68}
Barlow, R.~E. (1968).
\newblock Some recent developments in reliability theory.
\newblock Technical report, Unirersity of California, Berkeley.

\bibitem[\protect\astroncite{Basu}{1971}]{basu-71}
Basu, A.~P. (1971).
\newblock Bivariate failure rate.
\newblock {\em Journal of the American Statistical Association},
  66(333):103--104.

\bibitem[\protect\astroncite{Bebbington et~al.}{2007}]{be-la-zi-07}
Bebbington, M., Lai, C.-D., and Zitikis, R. (2007).
\newblock A flexible {W}eibull extension.
\newblock {\em Reliability Engineering \& System Safety}, 92(6):719--726.

\bibitem[\protect\astroncite{Carrasco et~al.}{2008}]{ca-or-co-08}
Carrasco, J.~M., Ortega, E.~M., and Cordeiro, G.~M. (2008).
\newblock A generalized modified {W}eibull distribution for lifetime modeling.
\newblock {\em Computational Statistics \& Data Analysis}, 53(2):450--462.

\bibitem[\protect\astroncite{Chen}{2000}]{chen-00new}
Chen, Z. (2000).
\newblock A new two-parameter lifetime distribution with bathtub shape or
  increasing failure rate function.
\newblock {\em Statistics \& Probability Letters}, 49(2):155--161.

\bibitem[\protect\astroncite{Cordeiro et~al.}{2012}]{co-or-si-12beta}
Cordeiro, G.~M., Ortega, E., and Silva, G. (2012).
\newblock The beta extended {W}eibull family.
\newblock {\em Journal of Probability and Statistical Science}, 10:15--40.

\bibitem[\protect\astroncite{Cordeiro and Silva}{2014}]{co-si-14}
Cordeiro, G.~M. and Silva, R.~B. (2014).
\newblock The complementary extended {W}eibull power series class of
  distributions.
\newblock {\em Ci\^{e}ncia e Natura}, 36(3):1--13.

\bibitem[\protect\astroncite{Cs\"{o}rg\"{o}̋ and Welsh}{1989}]{cs-we-89}
Cs\"{o}rg\"{o}̋, S. and Welsh, A. (1989).
\newblock Testing for exponential and {M}arshall-{O}lkin distributions.
\newblock {\em Journal of Statistical Planning and Inference}, 23(3):287--300.

\bibitem[\protect\astroncite{El-Bassioun et~al.}{2015}]{elba-elda-mu-el-15}
El-Bassioun, A.~H., El-Damcese, M., Mustafa, A., and Eliwa, M. (2015).
\newblock Bivariate exponentaited generalized {W}eibull-{G}ompertz
  distribution.
\newblock {\em arXiv preprint arXiv:1501.02241}.

\bibitem[\protect\astroncite{El-Gohary et~al.}{2013}]{el-al-al-13}
El-Gohary, A., Alshamrani, A., and Al-Otaibi, A.~N. (2013).
\newblock The generalized {G}ompertz distribution.
\newblock {\em Applied Mathematical Modelling}, 37(1-2):13--24.

\bibitem[\protect\astroncite{El-Gohary and El-Morshedy}{2015}]{elgo-elmo-15}
El-Gohary, A. and El-Morshedy, M. (2015).
\newblock Bivariate exponentiated modified {W}eibull extension.
\newblock {\em arXiv preprint arXiv:1501.03528}.

\bibitem[\protect\astroncite{El-Sherpieny et~al.}{2013}]{elSh-ib-be-13}
El-Sherpieny, E.~A., Ibrahim, S.~A., and Bedar, R.~E. (2013).
\newblock A new bivariate generalized {G}ompertz distribution.
\newblock {\em Asian Journal of Applied Sciences}, 1(4):141--150.

\bibitem[\protect\astroncite{Fisk}{1961}]{fisk-61}
Fisk, P.~R. (1961).
\newblock The graduation of income distributions.
\newblock {\em Econometrica}, 29(2):171--185.

\bibitem[\protect\astroncite{Fr{\'e}chet}{1927}]{frechet-27}
Fr{\'e}chet, M. (1927).
\newblock Sur la loi de probabilit{\'e} de l’{\'e}cart maximum.
\newblock {\em Annales de la societe Polonaise de Mathematique}, 6:93--116.

\bibitem[\protect\astroncite{Gompertz}{1825}]{gompertz-1825}
Gompertz, B. (1825).
\newblock On the nature of the function expressive of the law of human
  mortality, and on a new mode of determining the value of life contingencies.
\newblock {\em Philosophical Transactions of the Royal Society of London},
  115:513--583.

\bibitem[\protect\astroncite{Gupta and Kundu}{1999}]{gu-ku-99}
Gupta, R.~D. and Kundu, D. (1999).
\newblock Generalized exponential distributions.
\newblock {\em Australian \& New Zealand Journal of Statistics},
  41(2):173--188.

\bibitem[\protect\astroncite{Gurvich et~al.}{1997}]{gu-di-ra-97}
Gurvich, M., Dibenedetto, A., and Ranade, S. (1997).
\newblock A new statistical distribution for characterizing the random strength
  of brittle materials.
\newblock {\em Journal of Materials Science}, 32(10):2559--2564.

\bibitem[\protect\astroncite{Jamalizadeh and Kundu}{2013}]{ja-ku-13weighted}
Jamalizadeh, A. and Kundu, D. (2013).
\newblock Weighted {M}arshall-{O}lkin bivariate exponential distribution.
\newblock {\em Statistics}, 47(5):917--928.

\bibitem[\protect\astroncite{Joe}{1997}]{joe-97}
Joe, H. (1997).
\newblock {\em Multivariate models and multivariate dependence concepts}.
\newblock CRC Press.

\bibitem[\protect\astroncite{Johnson et~al.}{1995}]{jo-ko-ba-95-1}
Johnson, N.~L., Kotz, S., and Balakrishnan, N. (1995).
\newblock {\em Continuous Univariate Distributions}, volume~1.
\newblock John Wiley \& Sons, New York, second edition.

\bibitem[\protect\astroncite{Kies}{1958}]{kies-58}
Kies, J. (1958).
\newblock {\em The strength of glass}.
\newblock Naval Research Laboratory, Washington D.C.

\bibitem[\protect\astroncite{Kundu and Gupta}{2009}]{ku-gu-09-BGE}
Kundu, D. and Gupta, R.~D. (2009).
\newblock Bivariate generalized exponential distribution.
\newblock {\em Journal of Multivariate Analysis}, 100(4):581--593.

\bibitem[\protect\astroncite{Kundu and Gupta}{2010}]{ku-gu-10modified}
Kundu, D. and Gupta, R.~D. (2010).
\newblock Modified {S}arhan-{B}alakrishnan singular bivariate distribution.
\newblock {\em Journal of Statistical Planning and Inference}, 140(2):526--538.

\bibitem[\protect\astroncite{Kundu and Raqab}{2005}]{ku-ra-05}
Kundu, D. and Raqab, M.~Z. (2005).
\newblock Generalized {R}ayleigh distribution: different methods of
  estimations.
\newblock {\em Computational Statistics \& Data Analysis}, 49(1):187--200.

\bibitem[\protect\astroncite{Lai et~al.}{2003}]{la-xi-mu-03}
Lai, C., Xie, M., and Murthy, D. (2003).
\newblock A modified {W}eibull distribution.
\newblock {\em IEEE Transactions on Reliability}, 52(1):33--37.

\bibitem[\protect\astroncite{Lomax}{1954}]{lomax-54}
Lomax, K. (1954).
\newblock Business failures: Another example of the analysis of failure data.
\newblock {\em Journal of the American Statistical Association},
  49(268):847--852.

\bibitem[\protect\astroncite{Marshall and Olkin}{1967}]{ma-ol-67}
Marshall, A.~W. and Olkin, I. (1967).
\newblock A multivariate exponential distribution.
\newblock {\em Journal of the American Statistical Association},
  62(317):30--44.

\bibitem[\protect\astroncite{McCool}{2012}]{mccool-12}
McCool, J.~I. (2012).
\newblock {\em Using the {W}eibull distribution: reliability, modeling and
  inference}, volume 950.
\newblock John Wiley \& Sons.

\bibitem[\protect\astroncite{Meintanis}{2007}]{meintanis-07}
Meintanis, S.~G. (2007).
\newblock Test of fit for {M}arshall-{O}lkin distributions with applications.
\newblock {\em Journal of Statistical Planning and Inference},
  137(12):3954--3963.

\bibitem[\protect\astroncite{Mudholkar and Srivastava}{1993}]{mu-sr-93}
Mudholkar, G.~S. and Srivastava, D.~K. (1993).
\newblock Exponentiated {W}eibull family for analyzing bathtub failure-rate
  data.
\newblock {\em IEEE Transactions on Reliability}, 42(2):299--302.

\bibitem[\protect\astroncite{Nadarajah and Kotz}{2005}]{na-ko-05}
Nadarajah, S. and Kotz, S. (2005).
\newblock On some recent modifications of {W}eibull distribution.
\newblock {\em IEEE Transactions on Reliability}, 54(4):561--562.

\bibitem[\protect\astroncite{Nelson}{1999}]{nelson-99}
Nelson, R.~B. (1999).
\newblock {\em An Introduction to Copulas}.
\newblock Springer, New York.

\bibitem[\protect\astroncite{Nikulin and Haghighi}{2006}]{ni-ha-06}
Nikulin, M. and Haghighi, F. (2006).
\newblock A chi-squared test for power generalized {W}eibull family for the
  head-and-neck cancer censored data.
\newblock {\em Journal of Mathematical Sciences}, 133(3):1333--1341.

\bibitem[\protect\astroncite{Pham}{2002}]{pham-02}
Pham, H. (2002).
\newblock A vtub-shaped hazard rate function with applications to system
  safety.
\newblock {\em International Journal of Reliability and Applications},
  3(1):1--16.

\bibitem[\protect\astroncite{Pham and Lai}{2007}]{ph-la-07}
Pham, H. and Lai, C.-D. (2007).
\newblock On recent generalizations of the {W}eibull distribution.
\newblock {\em IEEE Transactions on Reliability}, 56(3):454--458.

\bibitem[\protect\astroncite{Phani}{1987}]{phani-87}
Phani, K.~K. (1987).
\newblock A new modified {W}eibull distribution function.
\newblock {\em Journal of the American Ceramic Society}, 70(8):182--184.

\bibitem[\protect\astroncite{Rayleigh}{1880}]{rayleigh-80}
Rayleigh, L. (1880).
\newblock On the resultant of a large number of vibrations of the same pitch
  and of arbitrary phase.
\newblock {\em Philosophical Magazine and Journal of Science}, 10(60):73--78.

\bibitem[\protect\astroncite{Rodriguez}{1977}]{rodriguez-77}
Rodriguez, R.~N. (1977).
\newblock A guide to the {B}urr type {XII} distributions.
\newblock {\em Biometrika}, 64(1):129--134.

\bibitem[\protect\astroncite{Santos-Neto et~al.}{2014}]{sa-bo-ze-na-co-14}
Santos-Neto, M., Bourguignon, M., Zea, L.~M., Nascimento, A.~D., and Cordeiro,
  G.~M. (2014).
\newblock The {M}arshall-{O}lkin extended {W}eibull family of distributions.
\newblock {\em Journal of Statistical Distributions and Applications}, 1(1):9.

\bibitem[\protect\astroncite{Sarhan and Apaloo}{2013}]{sa-ap-13}
Sarhan, A.~M. and Apaloo, J. (2013).
\newblock Exponentiated modified weibull extension distribution.
\newblock {\em Reliability Engineering \& System Safety}, 112:137--144.

\bibitem[\protect\astroncite{Sarhan et~al.}{2011}]{sa-ha-sm-ku-11}
Sarhan, A.~M., Hamilton, D.~C., Smith, B., and Kundu, D. (2011).
\newblock The bivariate generalized linear failure rate distribution and its
  multivariate extension.
\newblock {\em Computational Statistics and Data Analysis}, 55(1):644--654.

\bibitem[\protect\astroncite{Sarhan and Kundu}{2009}]{sa-ku-09}
Sarhan, A.~M. and Kundu, D. (2009).
\newblock Generalized linear failure rate distribution.
\newblock {\em Communications in Statistics-Theory and Methods},
  38(5):642--660.

\bibitem[\protect\astroncite{Silva et~al.}{2013}]{si-bo-di-co-13}
Silva, R.~B., Bourguignon, M., Dias, C. R.~B., and Cordeiro, G.~M. (2013).
\newblock The compound class of extended {W}eibull power series distributions.
\newblock {\em Computational Statistics and Data Analysis}, 58:352--367.

\bibitem[\protect\astroncite{Smith and Bain}{1975}]{sm-ba-75}
Smith, R.~M. and Bain, L.~J. (1975).
\newblock An exponential power life-testing distribution.
\newblock {\em Communications in Statistics}, 4(5):469--481.

\bibitem[\protect\astroncite{Surles and Padgett}{2001}]{su-pa-01}
Surles, J.~G. and Padgett, W.~J. (2001).
\newblock Inference for reliability and stress-strength for a scaled {B}urr
  type {X} distribution.
\newblock {\em Lifetime Data Analysis}, 7(2):187--200.

\bibitem[\protect\astroncite{Tahmasebi and
  Jafari}{2015}]{ta-ja-15Exponentiated}
Tahmasebi, S. and Jafari, A.~A. (2015).
\newblock Exponentiated extended {W}eibull-power series class of distributions.
\newblock {\em Ci\^{e}ncia e Natura}, 7(2):183--193.

\bibitem[\protect\astroncite{White}{1969}]{white-69}
White, J.~S. (1969).
\newblock The moments of log-{W}eibull order statistics.
\newblock {\em Technometrics}, 11(2):373--386.

\bibitem[\protect\astroncite{Xie and Lai}{1995}]{xi-ch-95}
Xie, M. and Lai, C.~D. (1995).
\newblock Reliability analysis using an additive weibull model with
  bathtub-shaped failure rate function.
\newblock {\em Reliability Engineering \& System Safety}, 52(1):87--93.

\bibitem[\protect\astroncite{Xie et~al.}{2002}]{xi-ta-go-02}
Xie, M., Tang, Y., and Goh, T.~N. (2002).
\newblock A modified {W}eibull extension with bathtub-shaped failure rate
  function.
\newblock {\em Reliability Engineering \& System Safety}, 76(3):279--285.

\end{thebibliography}

\end{document}